\newcommand{\prob}{\mathbf P}
\newcommand{\ex}{\mathbf E}
\newcommand{\real}{\mathbb R}
\newcommand{\Normal}{\mathcal{N}}
\DeclareMathOperator{\VaR}{VaR}
\DeclareMathOperator{\EVaR}{EVaR}
\DeclareMathOperator{\Pois}{Pois}
\DeclareMathOperator{\NIG}{NIG}
\DeclareMathOperator{\IG}{IG}
\DeclareMathOperator{\Exp}{Exp}
\DeclareMathOperator{\Bern}{Bern}
\numberwithin{equation}{section}
\newtheorem{Theorem}{Theorem}[section]
\newtheorem{Corollary}[Theorem]{Corollary}
\newtheorem{Lemma}[Theorem]{Lemma}
\theoremstyle{definition}
\newtheorem{Definition}[Theorem]{Definition}
\theoremstyle{remark}
\newtheorem{Remark}[Theorem]{Remark}
\begin{document}

\title[Properties of EVaR in relation to selected  distributions]{Properties of the entropic risk measure EVaR\\ in relation to selected  distributions}
\thanks{The first  author is supported by The Swedish Foundation for Strategic Research, grant  no.\ UKR22-0017.
The second author is supported by the Research Council of Finland, decision no.\ 359815.
The first and the second authors acknowledge that the present research is
carried out within the frame and support of the ToppForsk project no.
274410 of the Research Council of Norway with the title STORM: Stochastics
for Time-Space Risk Models.}

\author{Yuliya Mishura$^{1,2}$}
\address{$^1$Taras Shevchenko National University of Kyiv, Ukraine} 
\address{$^2$M\"alardalen University, Sweden}
\email{yuliyamishura@knu.ua}
\author{Kostiantyn Ralchenko$^{1,3}$}
\email{kostiantynralchenko@knu.ua}
\address{$^3$University of Vaasa, Finland}
\author{Petro Zelenko$^4$} 
\email{petro.zelenko@uni-ulm.de}
\address{$^4$Ulm University, Germany}
\author{Volodymyr Zubchenko$^1$}
\email{volodymyr.zubchenko@knu.ua}

\begin{abstract}
Entropic Value-at-Risk (EVaR) measure is a convenient coherent risk measure.   Due to certain difficulties in finding its analytical representation, it was previously calculated explicitly only for the normal distribution. We succeeded  to overcome these difficulties and  to  calculate  Entropic Value-at-Risk (EVaR) measure for Poisson,
compound Poisson, Gamma, Laplace, exponential, chi-squared, inverse Gaussian distribution and  normal inverse Gaussian
distribution with the help of  Lambert function that is a special function, generally speaking,  with two branches.
\end{abstract}

\keywords{Entropic Value-at-Risk; Poisson distribution; gamma distribution; Laplace distribution; inverse Gaussian distribution; normal inverse Gaussian   distribution; Lambert  function} 

\subjclass{91G70, 60E05, 60E10, 33E99}

\maketitle

\section{Introduction}
It is well known that financial institutions need to maintain a
certain load on assets to protect against sudden losses associated
with debt default risk, operational risks, market risks, liquidity
etc. To solve this problem, risk measures are used. A risk measure
assigns a real number to a random outcome or risk event that
expresses the degree of risk associated with that random outcome.
This concept has found many applications in various fields such as
finance, actuarial science, operations research and management.

The simplest and best known measure is VaR (Value-at-Risk).
Provided that a stock's fluctuation over time has no peaks, VaR is
useful for predicting the risk associated with a portfolio. But
when we face financial crises of different types, the chance of
VaR to give a complete overview of possible risks decreases, as it
is not sensitive to anything beyond its loss threshold. In
addition, VaR is not coherent. This problem is partially solved by
CVaR (Conditional Value-at-Risk), but not well enough. An
important disadvantage of CVaR is that it cannot be computed
efficiently, even for a sum of arbitrary independent random
variables.
Furthermore, financial markets may face high volatility and instability. In such circumstances, traders and managers use more stringent and conservative measures to manage risk rather than VaR and CVaR. In this case a more complicated measure can be  used, namely a coherent measure of risk called Entropic Value-at-Risk (EVaR), introduced in \cite{Ah12}. The basic properties of this measure are described in the papers   \cite{Ah12,AHMADIJAVID2019225,AP17}.
A quantitative analysis of various risk measures, including EVaR is conducted in \cite{Pich}. Delbaen \cite{FrDel} provides a relation between EVaR and other commonotone risk measures. 
The paper \cite{Pisch} extends and generalizes EVaR by involving R\'enyi entropies. It provides explicit relations among different entropic risk measures, elaborates their dual representations and presents their relations explicitly.
In \cite{AxChow} authors show that EVaR, which is not a dynamic risk
measure in general, can be a finitely-valued dynamic risk measure for at least one value of confidence parameter.
In \cite{Sojal} the concept of EVaR is applied to portfolio selection, and a new mean-EVaR model with uncertain random returns is established.
In \cite{LuxBoy} authors explore portfolio construction in the context of Gaussian mixture returns, aiming to maximize expected exponential utility. Additionally, the authors demonstrate that minimizing EVaR can also be addressed through convex optimization.

Important properties of EVaR are coherence and strong monotonicity in
its domain (see \cite{AHMADIJAVID2019225}), while monotonic risk
measures such as VaR and CVaR lack these properties.
However, EVaR also has some shortcomings.
In particular, there are certain types
of distributions with which this measure cannot be applied (for example, distributions for which the
moment-generating function does not exist).
In addition, the calculation of this measure is often reduced to solving optimization problem,
which leads to large consumption of computation time.
The goal of the present paper is to obtain analytical representations of EVaR for certain risk distribution, which will help to reduce this problem.
Namely, we derive the explicit formulas for EVaR of Poisson, compound Poisson, gamma, Laplace, inverse Gaussian and normal inverse Gaussian distributions, which are widely used in the risk modelling.
It turns out that for first four of these distributions EVaR is expressed trough the so-called Lambert function, which can be efficiently computed with the help of the modern mathematical software. 
In addition, for each distribution we provide graphical illustrations demonstrating the behavior of its EVaR depending on the various distributional parameters.

The paper is organized as follows.
In Section \ref{sec:evar} we recall the definition of EVaR together with its basic properties.
Section \ref{sec:evar-distributions} contains the calculation of EVaR for selected distributions. In subsection \ref{ssec:lambert} we recall the notion of the Lambert function, which is needed for further calculations. 
Subsections \ref{ssec:pois}--\ref{ssec:nig} are devoted to the derivation of EVaR for Poisson, compound Poisson, gamma, Laplace, inverse Gaussian and normal inverse Gaussian distributions respectively. Two appendices supplement the paper: Appendix~\ref{sec:evar-normal} contains the formula for EVaR of the normal distribution and Appendix~\ref{sec:pdfs} collects the probability density functions of selected distributions, considered in Section~\ref{sec:evar-distributions}.
\section{General properties of EVaR}
\label{sec:evar}
    
Let $(\Omega,\mathcal{F},\prob)$ be a probability space and let $\mathcal{L}_0 = \mathcal{L}_0 (\Omega,\mathcal{F},\prob)$ be  space of random variables $X\colon \Omega\rightarrow \real$.
   \begin{Definition}\label{rm}
A \emph{risk measure} $\rho$ is the mapping $\mathcal{L}_0 \rightarrow \overline{\real}$ for which the following conditions are satisfied:
\begin{enumerate}[label=\arabic*)]
\item $\rho(0) = 0$,
\item for all $a \in \real$, $X \in \mathcal{L}_0$ it holds that $\rho(X+a) = \rho(X)+a$,
\item for all  $X,Y \in \mathcal{L}_0$ such that $X \leq Y$  a.s. it holds that $\rho(X)\leq \rho(Y)$.
\end{enumerate}
   \end{Definition}
\begin{Definition}\label{crm}
       Risk measure $\rho\colon \mathcal{L}_0 \rightarrow \overline{\real}$ is called a \emph{coherent risk measure} if the following additional conditions are satisfied:
\begin{enumerate}[label=\arabic*)]
\item $\rho(tX+(1-t)Y) \leq t\rho(X)+(1-t)\rho(Y)$ for all $X,Y \in \mathcal{L}_0$, $t \in [0,1]$,
\item $\rho(tX) = t\rho(X)$ for all $t > 0$ and all $X\in \mathcal{L}_0$.
\end{enumerate}
   \end{Definition}
Property 1) of Definition \ref{crm} is called the convexity of the risk measure, while property 2) symbolizes its homogeneity. 

Among various risk measures, the following
risk measure is often used in modern risk management practice  (\cite{FöllmerSchied+2016,VAR}).
\begin{Definition}
    The \emph{Value-at-Risk} (VaR) with confidence level $\alpha \in (0,1)$ of a random variable $X$ is the smallest number $y$ such that $X$ does not exceed it with a minimum probability of $1-\alpha$, i.e.
    \[\VaR_\alpha(X) := \inf\{x\in \real: F_X(x) >1-\alpha\}.\]
\end{Definition}

However, in this paper we focus on another  risk measure, namely,  EVaR, introduced in \cite{Ah12} according to the
following definition.
\begin{Definition}\label{df:evar}
Let $X \in \mathcal{L}_0$. 
Assume that its moment-generating function $m_X(t) = \ex e^{t X}$ is well defined for all $t\ge0$. An \emph{entropic risk measure} EVaR with a confidence level $\alpha \in [0,1)$ (or a risk level $1-\alpha$) is defined as follows:
    \begin{equation}\label{eq:evar-def}
        \EVaR_\alpha(X):=\inf_{t>0}t^{-1} \log\left(\frac{1}{1-\alpha}m_X(t)\right), \quad \alpha \in [0,1).
    \end{equation}
\end{Definition}
\begin{Remark} It is sufficient  to  assume that moment-generating function $m_X(t)$ is well defined for all $t\in[0,A]$ for some $A>0$.
    
\end{Remark}

According to \cite{Ah12} and \cite{AP17}, EVaR is a coherent risk measure. 
In addition, an important property of EVaR from
\cite{AHMADIJAVID2019225} is the following: 
For $X,Y \in
\mathcal{L}_0$ with the distribution functions $F_X$ and
$F_Y$ respectively,  whose moment-generating functions exist for all $t\in\real$, the following one-to-one correspondence holds:
$\EVaR_\alpha(X) = \EVaR_\alpha(Y)$ for all $\alpha \in [0,1)$ if and only if $F_X(t) = F_Y(t)$ for all $t \in \real$.

One of the shortcomings of EVaR is the difficulty of real-world computations for the models. It was computed mostly for the Gaussian  distribution.  In the next sections we obtain analytical representations which help to overcome indicated shortcoming.

\section{Calculation of EVaR for  selected   distributions}
\label{sec:evar-distributions}
 
In this section we calculate EVaR for   Poisson, compound Poisson, Gamma, inverse Gaussian and normal--inverse Gaussian  distributions.
Note that EVaR for normal distribution is given in \cite{Ah12}, the corresponding formula is given in Appendix. We shall intensively calculate EVaR via so called Lambert function, therefore as the first step, we present its main properties. 

\subsection{Lambert $W$ function}
\label{ssec:lambert}
The Lambert $W$ function \cite{LF} is a multi-valued inverse of the function $x \mapsto xe^x$. 
In other words, $W$ is defined as a function satisfying
\begin{equation}\label{eq:w-equation}
W(x) e^{W(x)} = x,\quad x\in \real.
\end{equation}
For positive $x$ this function is single-valued, for $x < -\frac{1}{e}$ there is no inverse function. For $-\frac{1}{e}\leq x \leq 0$ there are two possible real values of $W(x)$ (see Fig.~\ref{W_function}). We denote the branch satisfying $W(x)\geq -1$ by $W_0(x)$ and the branch satisfying $ W(x)\leq-1$ by  $W_{-1}(x)$ \cite{LF}.
Note that 
\begin{itemize}
\item $W_0(-\frac{1}{e}) = W_{-1}(-\frac{1}{e}) = -1$;
\item $W_0(x)$ is defined for all $x\ge-\frac{1}{e}$ and is strictly increasing;
\item $W_{-1}(x)$ is defined for $-\frac{1}{e} \le x <0$ and is strictly decreasing.
\end{itemize}

\begin{figure}
    \centering
        \includegraphics[scale=0.5]{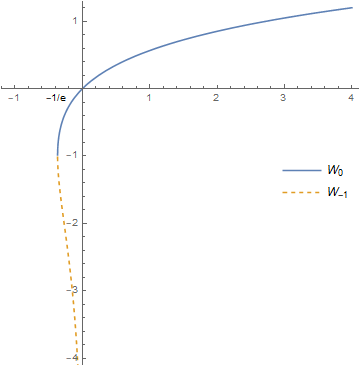}
        \caption{
            The two real branches of the Lambert $W$ function: $W_0(x)$ (solid  line) and $W_{-1}(x)$ (dashed  line).}
        \label{W_function}
\end{figure}
$W_0(x)$ is referred to as the \emph{principal branch} of the W function. It is obvious that $W_0(x)$ preserves the sign of $x.$

The derivative of $W$ equals (see \cite[Eq.~(3.2)]{LF})
\begin{equation} \label{eq:W-deriv}
W'(x) = \frac{1}{(1+W(x)) \exp(W(x))}
= \frac{W(x)}{x(1+W(x))}, 
\quad\text{if } x \ne 0.
\end{equation}

Taking logarithms of both sides of \eqref{eq:w-equation} and rearranging terms we get the following transformations:
\begin{gather}
    \log(W_0(z)) = \log(z) - W_0(z), \quad z>0,\notag
    \\
    \label{ln}
    \log(-W_k(z)) = \log(-z) - W_k(z),\quad k \in\{-1,0\}, \quad z\in[-\tfrac1e,0).
\end{gather}
A more general formula, involving branches of complex logarithm, can be found in \cite{LF2}.

\subsection{Poisson distribution}
\label{ssec:pois}

Let us calculate EVaR for a Poisson distribution. Denote $\beta = -\log (1-\alpha) - \lambda$.
\begin{Theorem}\label{th:evar-pois}
    Let $X \sim \Pois(\lambda)$, $\lambda>0$, then for all $\alpha \in [0,1)$
    \begin{align}\label{eq:evar-pois}
        \EVaR_\alpha(X) &=
        \begin{cases}
            \frac{\beta}{W_0(\frac{\beta}{e\lambda})}, & \beta \neq 0,\\
            e\lambda, &\beta=0,
        \end{cases}\\
        &= e\lambda e^{W_0(\frac{\beta}{e\lambda})},
        \label{eq:evar-pois-2}
    \end{align}
    where $W_0$ stands for the principal branch of the Lambert function.  
    The expression for $\EVaR_\alpha(X)$ is continuous in $\alpha$, and for any fixed $\alpha$  is continuous in $\beta\in \real$.
\end{Theorem}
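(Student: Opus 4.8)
The plan is to reduce the infimum in \eqref{eq:evar-def} to an elementary one-variable optimization and solve the stationarity equation via the Lambert function. For $X\sim\Pois(\lambda)$ the moment-generating function is $m_X(t)=\exp(\lambda(e^t-1))$, finite for all $t\ge 0$, so \eqref{eq:evar-def} applies and
\[
\EVaR_\alpha(X)=\inf_{t>0}g(t),\qquad g(t):=\frac{-\log(1-\alpha)+\lambda(e^t-1)}{t}=\frac{\beta+\lambda e^t}{t},
\]
with $\beta=-\log(1-\alpha)-\lambda\in[-\lambda,\infty)$. I would then differentiate: $g'(t)=h(t)/t^2$ with $h(t)=\lambda e^t(t-1)-\beta$, note $h'(t)=\lambda t e^t>0$ on $(0,\infty)$, $h(0^+)=\log(1-\alpha)\le 0$ and $h(t)\to+\infty$, so $h$ has at most one zero on $(0,\infty)$. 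If $\alpha>0$ this zero $t^\ast>0$ exists, $g$ decreases on $(0,t^\ast)$ and increases afterwards (and $g(0^+)=+\infty$), hence $t^\ast$ is the unique global minimizer; if $\alpha=0$ then $h>0$ on $(0,\infty)$, $g$ is strictly increasing, and $\inf_{t>0}g(t)=\lim_{t\to0^+}g(t)=\lambda$, a limiting, non-attained value that must be carried along separately.

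Next I would solve $h(t^\ast)=0$, i.e.\ $\lambda e^{t^\ast}(t^\ast-1)=\beta$. Substituting $u=t^\ast-1$ gives $u e^{u}=\beta/(e\lambda)$; since $t^\ast>0$ forces $u>-1$ and since $\beta/(e\lambda)\ge-1/e$ (because $-\log(1-\alpha)\ge0$) lies exactly in the domain of the principal branch, we obtain $t^\ast-1=W_0(\beta/(e\lambda))$. For $\alpha=0$ this reads $-1=W_0(-1/e)$, consistent with $t^\ast=0$. To evaluate $g(t^\ast)$ I would use the stationarity identity $\beta+\lambda e^{t^\ast}=\lambda t^\ast e^{t^\ast}$, whence
\[
\EVaR_\alpha(X)=g(t^\ast)=\lambda e^{t^\ast}=\lambda e^{1+W_0(\beta/(e\lambda))}=e\lambda\,e^{W_0(\beta/(e\lambda))},
\]
which is \eqref{eq:evar-pois-2}. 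Formula \eqref{eq:evar-pois} then follows from the defining relation \eqref{eq:w-equation} in the form $e^{W_0(z)}=z/W_0(z)$ for $z\ne0$ (so $e\lambda\,e^{W_0(\beta/(e\lambda))}=\beta/W_0(\beta/(e\lambda))$ when $\beta\ne0$), together with the value $e\lambda$ at $\beta=0$ obtained directly from $t^\ast=1$.

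Finally, for the continuity assertions I would argue from the closed form \eqref{eq:evar-pois-2}. The map $\alpha\mapsto\beta=-\log(1-\alpha)-\lambda$ is continuous (indeed strictly increasing) from $[0,1)$ onto $[-\lambda,\infty)$, the argument $\beta/(e\lambda)$ then ranges over $[-1/e,\infty)$, and $W_0$ is continuous there; hence $\alpha\mapsto e\lambda\,e^{W_0(\beta/(e\lambda))}$ is continuous on $[0,1)$, and likewise $\beta\mapsto e\lambda\,e^{W_0(\beta/(e\lambda))}$ is continuous on $[-\lambda,\infty)$. In particular its value at $\beta=0$ is $e\lambda\,e^{W_0(0)}=e\lambda$, matching the second branch of \eqref{eq:evar-pois}; since $\beta=0$ is the only candidate junction point, the piecewise expression \eqref{eq:evar-pois} is continuous as well. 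The only real (and modest) obstacles here are the branch/domain bookkeeping for $W_0$ — checking $\beta/(e\lambda)\ge-1/e$ and picking $W_0$ over $W_{-1}$ from $t^\ast>-1$ — and the separate handling of the boundary case $\alpha=0$, where the infimum is approached rather than attained.
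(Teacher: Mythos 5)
Your proposal is correct and follows essentially the same route as the paper: reduce to minimizing $(\beta+\lambda e^t)/t$ over $t>0$, solve the stationarity equation via the principal branch $W_0$ to get $t^\ast=1+W_0\bigl(\frac{\beta}{e\lambda}\bigr)$, and read off continuity from the closed form $e\lambda\,e^{W_0(\beta/(e\lambda))}$. The only differences are cosmetic and if anything slightly tidier: you justify global minimality by the monotonicity of $h(t)=\lambda e^t(t-1)-\beta$ rather than a second-derivative check, you obtain \eqref{eq:evar-pois-2} first and deduce \eqref{eq:evar-pois} from $e^{W_0(z)}=z/W_0(z)$, and you treat the boundary case $\alpha=0$ (where the infimum $\lambda$ is approached but not attained) explicitly, a point the paper glosses over when asserting the infimum is attained in $(0,\infty)$.
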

\begin{proof}
    Recall that the moment-generating function of the Poisson distribution equals:
    \begin{equation}
        m_X(t) = \exp\{\lambda(e^t-1)\}, \quad t\in\real.\nonumber
    \end{equation}
    Therefore, by Definition \ref{df:evar},
    \begin{equation}
        \EVaR_\alpha(X) = \inf_{t > 0}\frac{1}{t}\left(-\log(1-\alpha) + \lambda(e^t-1)\right) = \inf_{t > 0}\frac{\beta + \lambda e^t}{t}.\nonumber
    \end{equation}
    Let us find the the infimum of the function $f(t) = \frac{\beta + \lambda e^t}{t}$ over $t>0$. Note that $f$ is continuous on $(0, +\infty)$ and tends to $+\infty$ at $0$ and $+\infty$. It is obvious at $+\infty$ but at $0$ it follows from the fact that $$\beta+\lambda=-\log(1-\alpha)>0.$$ Therefore the infimum is  in fact minimal value and achieved inside $(0, +\infty)$.
    
Let $\beta\neq 0$. Then the derivative of the function $f$ equals
    \[
    f'(t) = \frac{\lambda e^t(t-1)-\beta}{t^2}.
    \]
    The condition $f'(t) = 0$ for some $t>0$ leads to the equation
    \begin{equation}\label{eq:pois-minimization}
     e^{t-1}(t-1)=\frac{\beta}{e\lambda}.
    \end{equation}
Since for all $\alpha\in[0,1)$
\[
\frac{\beta}{e\lambda} = \frac{1}{e\lambda}\log\frac{1}{1-\alpha} - \frac{1}{e} \ge - \frac{1}{e},
\]
we see that the solution of \eqref{eq:pois-minimization} can be expressed through the principal branch $W_0$ of Lambert function as follows
    \begin{equation*}
    t^*= 1+W_0\left(\frac{\beta}{e\lambda}\right).
    \end{equation*}
 The choice of the the principal branch $W_0$ of  Lambert function corresponds to the fact that we consider  $t>0$. 

Let us check that the sufficient condition for the local minimum at point $t^*$ is fulfilled.
Using the equality $e^{W(x)} = x / W(x)$ (see \eqref{eq:w-equation}), we get
    \begin{align*}
        f''(t^*) &= \frac{2\beta + \lambda e^t(t^2- 2t+2)}{t^3}\bigg|_{t= t^*} 
        =\frac{2\beta + \lambda e^{1+W_0(\frac{\beta}{e\lambda})}\left(1+\left(W_0\left(\frac{\beta}{e\lambda}\right)\right)^2\right)}{\left(1+W_0\left(\frac{\beta}{e\lambda}\right)\right)^3}
        \\
        &=\frac{\beta\left(W_0\left(\frac{\beta}{e\lambda}\right)+1\right)^2}{\left(1+W_0\left(\frac{\beta}{e\lambda}\right)\right)^3W_0\left(\frac{\beta}{e\lambda}\right)}
        =\frac{\beta}{\left(1+W_0\left(\frac{\beta}{e\lambda}\right)\right) W_0\left(\frac{\beta}{e\lambda}\right)}
       > 0, 
    \end{align*}
    because $W_0$
preserves the sign of argument and therefore $\frac{\beta}{W_0\left(\frac{\beta}{e\lambda}\right)}>0$  and  $\left(1+W_0\left(\frac{\beta}{e\lambda}\right)\right)=t^*>0$ also. So, the point $t=t^*$ minimizes $f(t)$ over $t>0$ and the minimum equals 
    $f(t^*) = \frac{\beta}{W_0(\frac{\beta}{e\lambda})}$.
    
In the case $\beta = 0$ we get that the minimum is achieved at point $t=1$ and equals
    \begin{equation*}
        \EVaR_\alpha(X) = \inf_{t > 0}\frac{\lambda e^t}{t} = \lambda e.
    \end{equation*}
Thus the formula \eqref{eq:evar-pois} is proved.
Combining it with \eqref{eq:w-equation}, we derive the representation \eqref{eq:evar-pois-2}, which implies the continuity.
\end{proof}

\begin{figure}[t]
    \centering
    \includegraphics[scale=0.55]{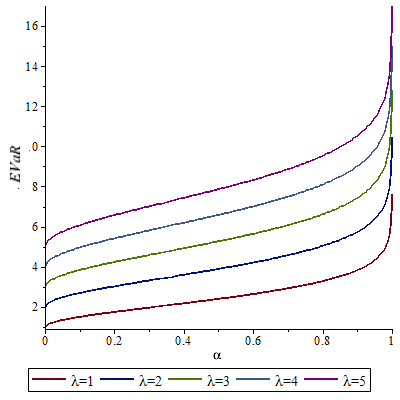}
    \includegraphics[scale=0.3]{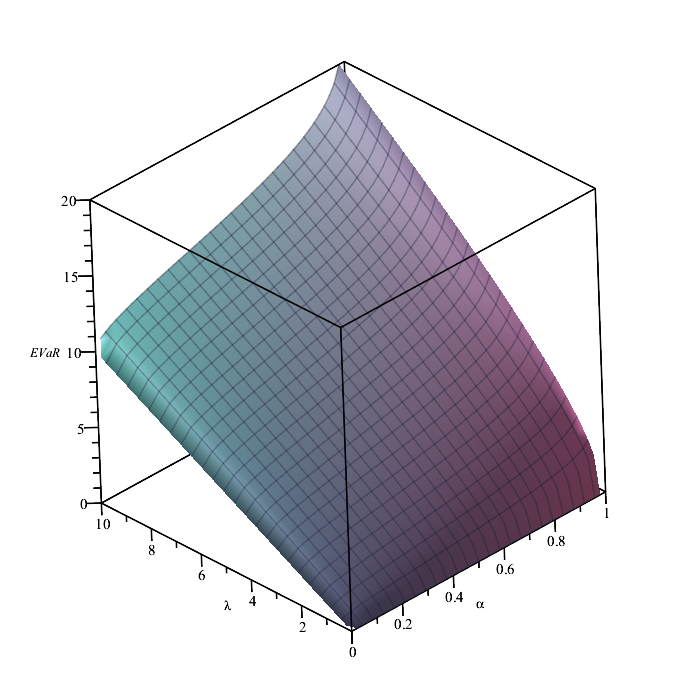}
    \caption{
    EVaR for Poisson distribution with constant intensity $\lambda$.}
\end{figure}

\subsection{Compound Poisson distribution}
Risk measures are often used for models with a large number of
identical losses. This could be, for example, losses on an
insurance portfolio. In such models, the losses are usually
modelled as independent and identically distributed random
variables, and the loss intensity will have a discrete
distribution, such as binomial, Poisson, or negative binomial.
Suppose that a collective risk model $X:= \sum_{i=1}^{\eta}\xi_i$
(here $\sum_{i=1}^0 = 0$ by convention)
is given, where the intensity of insurance cases is modelled by
the Poisson distribution $\eta \sim \Pois(\lambda)$, and the losses
$\{\xi_i,i\ge1\}$ are independent, identically distributed
random variables with the moment-generating function $m_\xi(t) =
\ex e^{t\xi}$. 
It is well known that the moment-generating function of the compound Poisson distribution is given by
\begin{equation}\label{eq:mgf-comppois}
    m_X(t)= \exp\{\lambda (m_\xi(t)-1)\}.
\end{equation}
Then for all $\alpha \in (0,1)$,
\begin{equation}\label{eq:evar-comppois}
    \EVaR_\alpha(X) = \inf_{t>0}\frac{-\log(1-\alpha)+\lambda(m_\xi(t)-1)}{t}.
\end{equation}

Now let us consider Bernoulli distribution for the values of
losses.
\begin{Theorem}
    Let $\eta \sim \Pois(\lambda)$, $\xi_i \sim \Bern(p)$, $i\ge1$, be independent, identically distributed random variables, and
    $X:= \sum_{i=1}^{\eta}\xi_i$. Then for any $\alpha \in (0,1)$
    \[
        \EVaR_\alpha(X) =
        \begin{cases}
            \frac{\beta}{W_0(\frac{\beta}{e\lambda p})}, & \beta \neq 0,\\
            e\lambda p, &\beta=0,
        \end{cases}
    \]
    where $\beta = -\log(1-\alpha) - \lambda p$, and $W_0$ stands for the principal branch of the Lambert function.   
    The expression for $\EVaR_\alpha(X)$ is continuous in $\alpha$.
\end{Theorem}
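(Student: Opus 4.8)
The plan is to reduce the statement to Theorem~\ref{th:evar-pois} by computing the moment-generating function explicitly. First I would recall that for a Bernoulli random variable $\xi \sim \Bern(p)$ one has $m_\xi(t) = \ex e^{t\xi} = 1 - p + p e^t = 1 + p(e^t - 1)$, which is finite for every $t \in \real$. Substituting this into the formula \eqref{eq:mgf-comppois} for the moment-generating function of the compound Poisson distribution gives
\begin{equation*}
    m_X(t) = \exp\{\lambda(m_\xi(t) - 1)\} = \exp\{\lambda p (e^t - 1)\}, \quad t \in \real.
\end{equation*}

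The key observation is that the right-hand side is precisely the moment-generating function of the Poisson distribution with intensity $\lambda p$; equivalently, $X \sim \Pois(\lambda p)$ by the thinning property of the Poisson distribution, although only the equality of moment-generating functions is needed here. Since EVaR depends on $X$ only through $m_X$ (see Definition~\ref{df:evar}), it follows that $\EVaR_\alpha(X) = \EVaR_\alpha(Z)$ for $Z \sim \Pois(\lambda p)$. Applying Theorem~\ref{th:evar-pois} with $\lambda$ replaced by $\lambda p > 0$, and noting that the quantity $\beta = -\log(1-\alpha) - \lambda p$ in the present statement is exactly the shift parameter of that theorem associated with intensity $\lambda p$, we obtain the claimed piecewise formula together with the equivalent representation $\EVaR_\alpha(X) = e\lambda p\, e^{W_0(\beta/(e\lambda p))}$. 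The continuity in $\alpha$ is inherited directly from the corresponding assertion of Theorem~\ref{th:evar-pois}.

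There is essentially no obstacle here beyond spotting the reduction; the only point requiring a moment's care is to check that the definition of $\beta$ used in the statement coincides with the one in Theorem~\ref{th:evar-pois} after the substitution $\lambda \mapsto \lambda p$, which it does. Alternatively, one could bypass Theorem~\ref{th:evar-pois} and repeat its proof verbatim: minimize $f(t) = (\beta + \lambda p e^t)/t$ over $t > 0$, solve $f'(t) = 0$ through the equation $e^{t-1}(t-1) = \beta/(e\lambda p)$ using the principal branch $W_0$ (legitimate because $\beta/(e\lambda p) \ge -\tfrac1e$ for all $\alpha \in (0,1)$), and verify the second-order condition via the sign-preservation of $W_0$; but the reduction argument is shorter and conceptually cleaner.
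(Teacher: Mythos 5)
Your reduction is correct and is exactly the argument the paper uses: substitute the Bernoulli moment-generating function $m_\xi(t)=1-p+pe^t$ into \eqref{eq:mgf-comppois}, recognize $m_X(t)=\exp\{\lambda p(e^t-1)\}$ so that $X\sim\Pois(\lambda p)$, and invoke Theorem~\ref{th:evar-pois} with intensity $\lambda p$. Nothing further is needed.
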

\begin{proof}
By inserting the moment-generating function $m_\xi(t) = 1-p +pe^t$ of the Bernoulli distribution
into \eqref{eq:mgf-comppois} it is not hard to see that $X\sim\Pois(\lambda p)$. Hence, the result follows from Theorem \ref{th:evar-pois}.
 \end{proof}

Now let us consider the case where losses have a
centered normal distribution.
\begin{Theorem}
    Let $X:= \sum_{i=1}^{\eta}\xi_i$, where $\eta \sim \Pois(\lambda)$,  and $\forall \geq 1: \xi_i \sim \Normal(0,\sigma^2)$ are independent, identically distributed random variables. Then for all $\alpha \in (0,1)$,
    \begin{equation}\label{eq:comp-pois-normal}
        \EVaR_\alpha(X) =
        \begin{cases}
           \beta\sigma\frac{\left(2W_0(\gamma)+1\right)^{1/2}}{2W_0(\gamma)}, & \beta \neq 0,\\
            \lambda\sigma\sqrt{e}, & \beta = 0.
        \end{cases}
    \end{equation}
    where $\beta = -\log(1-\alpha) - \lambda$, $\gamma = \frac{\beta}{2\sqrt{e}\lambda}$.
    In addition, $\EVaR_\alpha(X)$ is continuous in $\alpha$.
\end{Theorem}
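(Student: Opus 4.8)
The plan is to follow the same route as in the proof of Theorem~\ref{th:evar-pois}, now with the moment-generating function of the centered normal distribution. Since $m_\xi(t) = e^{\sigma^2 t^2/2}$, formula~\eqref{eq:mgf-comppois} gives $m_X(t) = \exp\{\lambda(e^{\sigma^2 t^2/2}-1)\}$, so by~\eqref{eq:evar-comppois}
\[
\EVaR_\alpha(X) = \inf_{t>0} f(t), \qquad f(t) := \frac{\beta + \lambda e^{\sigma^2 t^2/2}}{t}.
\]
As before, $f$ is continuous on $(0,\infty)$ and tends to $+\infty$ both as $t\to0^+$ (because $\beta + \lambda = -\log(1-\alpha) > 0$) and as $t\to+\infty$; hence the infimum is a minimum attained at an interior point.

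Next I would reduce the stationarity condition to Lambert form. Differentiating, $f'(t)=0$ is equivalent to $\lambda e^{\sigma^2 t^2/2}(\sigma^2 t^2 - 1) = \beta$, and the substitution $w := \tfrac12(\sigma^2 t^2 - 1)$, i.e.\ $\sigma^2 t^2/2 = w + \tfrac12$, turns this into $2\sqrt{e}\,\lambda\, w e^{w} = \beta$, that is $w e^{w} = \gamma$. To select the branch I check that $\gamma$ lies in the domain of $W_0$: for $\alpha\in(0,1)$ we have $-\log(1-\alpha)>0$, hence $\gamma > -\tfrac{1}{2\sqrt e}$; moreover $-\tfrac{1}{2\sqrt e} = (-\tfrac12)e^{-1/2}$ with $-\tfrac12 \ge -1$, so $W_0(-\tfrac{1}{2\sqrt e}) = -\tfrac12$, and $-\tfrac{1}{2\sqrt e} > -\tfrac1e$ since $\sqrt e < 2$. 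Therefore $w = W_0(\gamma)$ is the relevant root; by monotonicity of $W_0$ we get $W_0(\gamma) > -\tfrac12$ for every $\alpha\in(0,1)$, and
\[
t^* = \frac{(2W_0(\gamma)+1)^{1/2}}{\sigma} > 0
\]
is a well-defined positive critical point. That $t^*$ is the minimizer follows either from a direct check $f''(t^*)>0$ as in Theorem~\ref{th:evar-pois}, or from the fact that $s\mapsto\lambda e^{s/2}(s-1)$ is strictly increasing on $s=\sigma^2 t^2\in(0,\infty)$ from $-\lambda$ to $+\infty$, so the interior critical point is unique and thus global.

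Finally I evaluate $f(t^*)$. For $\beta\neq0$, using $e^{W_0(\gamma)}=\gamma/W_0(\gamma)$ gives $\lambda e^{\sigma^2(t^*)^2/2} = \sqrt e\,\lambda\, e^{W_0(\gamma)} = \beta/(2W_0(\gamma))$, so the numerator of $f(t^*)$ equals $\beta\,(2W_0(\gamma)+1)/(2W_0(\gamma))$ and
\[
\EVaR_\alpha(X) = f(t^*) = \beta\sigma\,\frac{(2W_0(\gamma)+1)^{1/2}}{2W_0(\gamma)},
\]
which is the asserted formula. For $\beta=0$ I minimize $f(t)=\lambda e^{\sigma^2 t^2/2}/t$ directly: $(\log f)'(t)=\sigma^2 t - 1/t$ vanishes at $t=1/\sigma$, yielding $\EVaR_\alpha(X)=\lambda\sigma\sqrt e$. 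Continuity in $\alpha$ then follows from the branch-free representation $\EVaR_\alpha(X) = \lambda\sigma\sqrt e\; e^{W_0(\gamma)}(2W_0(\gamma)+1)^{1/2}$, obtained by substituting $\beta = 2\sqrt e\,\lambda\, W_0(\gamma)e^{W_0(\gamma)}$ into the numerator; this expression is continuous in $\gamma$, hence in $\alpha$, and equals $\lambda\sigma\sqrt e$ at $\gamma=0$. The step I expect to be the main obstacle is the change of variables that brings the stationarity equation to the canonical form $we^w=\gamma$, together with the sign bookkeeping it requires: one must verify that $\gamma$ never leaves $[-1/e,\infty)$, that the principal branch $W_0$ is the correct one for $t>0$, and that $2W_0(\gamma)+1>0$, so that $t^*$ is a genuine positive critical point and the resulting expression is well-defined.
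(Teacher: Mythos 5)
Your proposal is correct and follows essentially the same route as the paper: the same objective $f(t)=(\beta+\lambda e^{\sigma^2t^2/2})/t$, the same reduction of $f'(t)=0$ to $we^w=\gamma$ with $w=\tfrac12(\sigma^2t^2-1)$, the same choice of the principal branch via $\gamma>-\tfrac{1}{2\sqrt e}>-\tfrac1e$ and $W_0(\gamma)>-\tfrac12$, and the same evaluation using $e^{W_0(\gamma)}=\gamma/W_0(\gamma)$, with the $\beta=0$ case treated separately. Two of your verification steps deviate from the paper, and both are sound (arguably cleaner): where the paper checks $f''(t_*)>0$ explicitly, you argue that $s\mapsto\lambda e^{s/2}(s-1)$ increases strictly from $-\lambda$ to $+\infty$ on $(0,\infty)$, so the interior critical point is unique and therefore the global minimizer (given that $f\to+\infty$ at both ends of the interval); and where the paper proves continuity at $\beta=0$ by L'H\^opital's rule, you substitute $\beta=2\sqrt e\,\lambda W_0(\gamma)e^{W_0(\gamma)}$ to obtain the branch-free representation $\sqrt e\,\lambda\sigma\, e^{W_0(\gamma)}\left(2W_0(\gamma)+1\right)^{1/2}$, valid for all $\beta$, which is exactly the device the paper itself uses for the Poisson case in \eqref{eq:evar-pois-2}. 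No gaps.
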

\begin{proof}    
Substituting the moment-generating function of the normal distribution $m_\xi(t) =\linebreak \exp\{\frac{1}{2}t^2\sigma^2\}$ into \eqref{eq:evar-comppois}, we see that
    \begin{equation}
        \EVaR_{\alpha} = \inf_{t>0}\frac{-\log(1-\alpha)+\lambda(m_\xi(t)-1)}{t} = \inf_{t > 0}\frac{\lambda \exp\left\{\frac{\sigma^2t^2}{2}\right\}+\beta}{t},\nonumber
    \end{equation}
So we need to minimize the function
    \[
    f(t) = \frac{\lambda \exp\left\{\frac{\sigma^2t^2}{2}\right\}+\beta}{t}.
    \]
     First, let us consider the case $\beta \neq 0$.
     The derivative of $f$ equals
    \begin{equation}
        f'(t) = \frac{\lambda \exp\left\{\frac{\sigma^2t^2}{2}\right\}(\sigma^2t^2-1)-\beta}{t^2}. \nonumber
    \end{equation}
    Then the condition $f'(t) = 0$ implies the equation
    \begin{equation}\label{eq:equation-cpois}
        \exp\left\{\frac{\sigma^2t^2-1}{2}\right\}\cdot \frac{\sigma^2t^2-1}{2} =\frac{\beta}{2\sqrt{e}\lambda}\eqqcolon\gamma,
    \end{equation}
    which can be solved with the help of the Lambert function (see subsection \ref{ssec:lambert}).
    To this end, we need to check that the right-hand side of equation \eqref{eq:equation-cpois} exceeds $-\frac 1e$.
    This condition holds, because
    \begin{equation}\label{eq:inequalities-cpois}
       \gamma = \frac{\beta}{2\sqrt{e}\lambda} 
        = \frac{-\log(1-\alpha) - \lambda}{2\sqrt{e}\lambda} 
        \ge -\frac{1}{2\sqrt{e}}
        > -\frac{1}{e} 
    \end{equation}
    Therefore, by the definition \eqref{eq:w-equation} of the Lambert function, the solution of \eqref{eq:equation-cpois} is given by
       \begin{equation}
        t_*:= \frac{1}{\sigma}\left(1+2W_0(\gamma)\right)^{\frac12}.\nonumber
    \end{equation}
    Note that we choose the principal branch $W_0$ of the Lambert function, because for $W_{-1}$ the expression under square root is negative. For $W_0$ it is always positive. Indeed, by \eqref{eq:inequalities-cpois},
    $\gamma\ge -\frac{1}{2\sqrt{e}}$. Due to monotonicity of $W_0$, we have
    \[
    W_0(\gamma)\ge W_0\left(-\frac{1}{2\sqrt{e}}\right) = -\frac12,
    \]
    where the last equality follows from the definition of the Lambert function (it easy to see from  \eqref{eq:w-equation} that $W_0 (x) = -\frac12$ for $x=-\frac{1}{2\sqrt{e}}$).
    Thus the value $t_*$ is well defined for any $\alpha\in(0,1)$.
   Let us check the sufficient condition for a local minimum.
   The second derivative of $f$ at $t=t_*$ equals
    \[
        f''(t_*) = \frac{2\beta + \lambda \exp\left\{\frac{\sigma^2t_*^2}{2}\right\}\left(2+\sigma^4t_*^4-\sigma^2t_*^2\right)}{t_*^3}
    \]
    Since $t_*$ satisfies the equation \eqref{eq:equation-cpois}, we see that
    \begin{equation}\label{eq:equation-cpois1}
    \lambda \exp\left\{\frac{\sigma^2t_*^2}{2}\right\}
    = \frac{\beta}{\sigma^2t_*^2-1}
    = \frac{\beta}{2W_0(\gamma)}.
    \end{equation}
    Therefore 
    \begin{align*}
        f''(t_*)&=\frac{2\beta + \frac{\beta}{2W_0(\gamma)}\left(2+\left(1+2W_0(\gamma)\right)^2- \left(1+2W_0(\gamma)\right)\right)}{\sigma^{-3}\left(2W_0(\gamma)+1\right)^{3/2}} 
    \\
        &= \frac{\sigma^{3}\beta\left(2\left(W_0(\gamma)\right)^2+W_0(\gamma) + 3\right)}{\left(1+2W_0(\gamma)\right)^{3/2}W_0(\gamma)}.
    \end{align*}
   Note that for any $\beta\ne0$
    \[
        \frac{\beta}{W_0(\gamma)} = \frac{\beta}{W_0\left(\frac{\beta}{2\sqrt{e}\lambda}\right)} > 0
    \]
    and
    \[
        2\left(W_0(\gamma)\right)^2+W_0(\gamma) + 3 > 0,
    \]
    because $2x^2+x+3 > 0$ for any $x\in\real$.
    As a result, $f''(t) > 0$. The sufficient condition for a local minimum is satisfied. 
    Therefore, the minimal value of $f(t)$ is attained at $t=t_*$ and equals
    \[
    f(t_*) = \frac{\lambda \exp\left\{\frac{\sigma^2t_*^2}{2}\right\}+\beta}{t_*}
    =  \frac{\frac{\beta}{2W_0(\gamma)} + \beta}{\sigma^{-1}(1+2W_0(\gamma))^{1/2}}
    = \beta\sigma\frac{(2W_0(\gamma)+1)^{1/2}}{2W_0(\gamma)},
    \]
    where we have used \eqref{eq:equation-cpois1}.
    
    Let us consider the case $\beta = 0$. The problem reduces to the finding the value of
    \begin{equation}
        \inf_{t>0}\frac{\lambda \exp\left\{\frac{\sigma^2t^2}{2}\right\}}{t}.\nonumber
    \end{equation}
    The minimum is achieved at the point $t=\frac{1}{\sigma}$. So $\EVaR_\alpha(X) = \lambda \sigma \sqrt{e}$. 
    
    Thus the formula \eqref{eq:comp-pois-normal} is proved. It remains to verify the continuity of $\EVaR_\alpha(X)$ at the point $\beta=0$.
    Let us check the convergence of the previous expression by applying L'H\^opital's rule since the numerator and denominator tend to 0. 
    Their derivatives can be computed using \eqref{eq:W-deriv} as follows
    \[
        \frac{d}{d\beta} W_0\left(\frac{\beta}{2\sqrt{e}\lambda}\right) = \frac{1}{2\sqrt{e}\lambda \exp\{W_0(\gamma)\}(1+W_0(\gamma))}.
    \]
    \begin{multline*}
        \frac{d}{d\beta} \left(\beta\sigma\left(2W_0\left(\frac{\beta}{2\sqrt{e}\lambda}\right)+1\right)^{\frac12}\right) \\*
        = \sigma(2W_0(\gamma)+1)^\frac12 + \frac{\beta\sigma}{2\sqrt{e}\lambda \exp\{W_0(\gamma)\}(1+W_0(\gamma))(2W_0(\gamma) + 1)^{1/2}}.\nonumber
    \end{multline*}
   Therefore, the limit equals
    \begin{align*}
        \MoveEqLeft[1]
        \lim_{\beta\rightarrow0}\EVaR_\alpha(X) \\
        &=\lim_{\beta\rightarrow 0}\Biggl[\sqrt{e}\lambda \exp\{W_0(\gamma)\}(1+W_0(\gamma))\\*
        &\qquad\qquad\times\Biggl(\sigma(2W_0(\gamma)+1)^\frac12 
        \\
        &\qquad\qquad\qquad+\frac{\beta\sigma}{2\sqrt{e}\lambda \exp\{W_0(\gamma)\}(1+W_0(\gamma))(2W_0(\gamma) + 1)^{1/2}}\Biggr)\Biggr] \\
        &= \sqrt{e}\lambda\sigma.\nonumber
    \end{align*}
    Thus, the continuity is proved.
\end{proof}

\begin{figure}[t]
    \centering
    \includegraphics[scale=0.43]{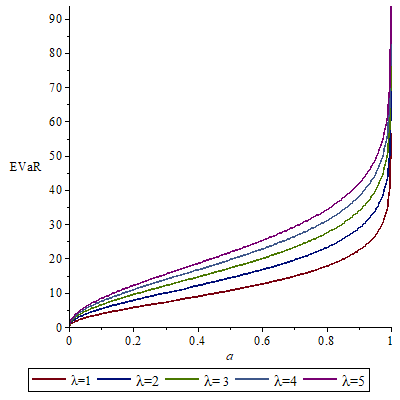}
    \includegraphics[scale=0.3]{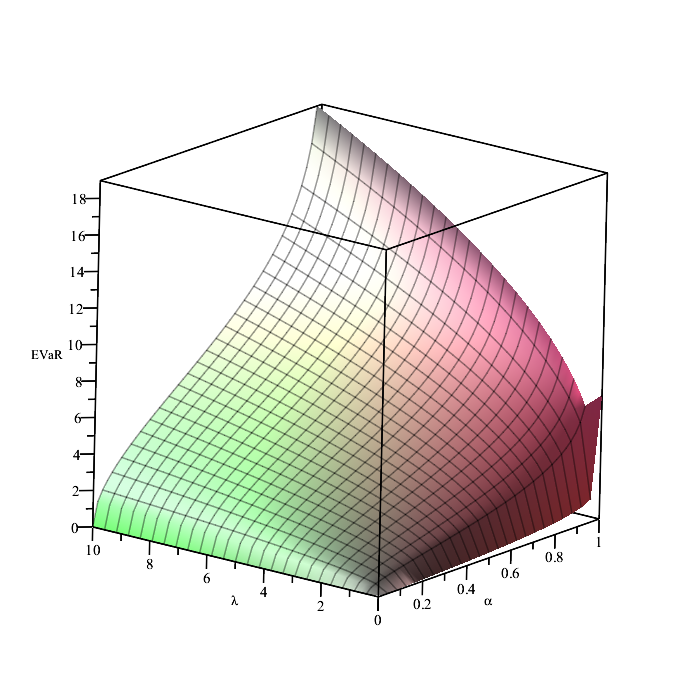}
    \caption{
        EVaR for compound Poisson distribution with normal distribution of jumps ($\sigma=1$).}
\end{figure}

\subsection{Gamma distribution}
\begin{Theorem}
    Let $X\sim G(k,\theta)$ have a gamma distribution. Then for any $\alpha \in (0,1)$,
    \begin{equation}\label{eq:evar-gamma}
        \EVaR_\alpha(X) = - k\theta W_{-1}\left(-e^{-1}(1-\alpha)^{\frac{1}{k}}\right).
    \end{equation}
    Here $W_{-1}$ denotes the branch of Lambert function that does not exceed $-1$.
    The expression for $\EVaR_\alpha(X)$ is continuous in $\alpha$.
\end{Theorem}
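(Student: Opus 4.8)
The plan is to follow the scheme of the proof of Theorem~\ref{th:evar-pois}: reduce $\EVaR_\alpha(X)$ to the minimum of a one-variable function, locate the unique interior critical point via the Lambert function, and evaluate the function there. Recall that for $X\sim G(k,\theta)$ one has $m_X(t)=(1-\theta t)^{-k}$ for $t\in(0,1/\theta)$ and $m_X(t)=+\infty$ for $t\ge1/\theta$. Writing $c:=-\log(1-\alpha)>0$, Definition~\ref{df:evar} gives
\[
\EVaR_\alpha(X)=\inf_{0<t<1/\theta}f(t),\qquad f(t):=\frac{c-k\log(1-\theta t)}{t}.
\]
Since $c>0$, the function $f$ is continuous on $(0,1/\theta)$ and tends to $+\infty$ both as $t\to0^+$ and as $t\to(1/\theta)^-$ (in the latter case because $-\log(1-\theta t)\to+\infty$ while $t\to1/\theta$); hence the infimum is a minimum attained at an interior critical point.

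Next I would write $f'(t)=h(t)/t^2$, where $h(t)=\dfrac{k\theta t}{1-\theta t}+k\log(1-\theta t)-c$, and note that a direct computation gives $h'(t)=\dfrac{k\theta^2 t}{(1-\theta t)^2}>0$ on $(0,1/\theta)$, while $h(0^+)=-c<0$ and $h((1/\theta)^-)=+\infty$ (the rational term dominates the logarithm). Therefore $h$ has a unique zero $t^*$, $f$ decreases on $(0,t^*)$ and increases on $(t^*,1/\theta)$, so $t^*$ is the unique global minimiser; alternatively one may verify $f''(t^*)>0$ directly, as in the proof of Theorem~\ref{th:evar-pois}.

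To identify $t^*$, I would put $u=1-\theta t^*\in(0,1)$; clearing denominators in $h(t^*)=0$ gives $k\bigl(\tfrac1u-1+\log u\bigr)=c$, and the substitution $w=-\log u>0$ turns this into $e^{-w}\bigl(w+1+\tfrac ck\bigr)=1$. Setting $z:=w+1+\tfrac ck$ and using $e^{-c/k}=(1-\alpha)^{1/k}$ then yields the Lambert equation $(-z)e^{(-z)}=-e^{-1}(1-\alpha)^{1/k}$, i.e.\ $-z=W\bigl(-e^{-1}(1-\alpha)^{1/k}\bigr)$. Since $t$ ranges over $(0,1/\theta)$, the variable $z$ ranges over $\bigl(1+\tfrac ck,\infty\bigr)\subset(1,\infty)$, so $-z<-1$; as the argument $-e^{-1}(1-\alpha)^{1/k}$ lies in $(-\tfrac1e,0)$ for $\alpha\in(0,1)$, this forces the branch $W_{-1}$, hence $z=-W_{-1}\bigl(-e^{-1}(1-\alpha)^{1/k}\bigr)$.

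Finally, rather than plugging an explicit (messy) $t^*$ back into $f$, I would use the relations produced above, namely $u=1/z$ (so $t^*=(z-1)/(\theta z)$) and $c-k\log u=c+k\log z=k(z-1)$, to obtain $f(t^*)=\dfrac{k(z-1)}{(z-1)/(\theta z)}=k\theta z=-k\theta\,W_{-1}\bigl(-e^{-1}(1-\alpha)^{1/k}\bigr)$, which is \eqref{eq:evar-gamma}. Continuity in $\alpha$ then follows because $\alpha\mapsto-e^{-1}(1-\alpha)^{1/k}$ maps $(0,1)$ continuously into $(-\tfrac1e,0)$, where $W_{-1}$ is continuous; and as $\alpha\to0^+$ the right-hand side tends to $k\theta$, in agreement with $\EVaR_0(X)=\inf_{t>0}\tfrac{-k\log(1-\theta t)}{t}=k\theta$. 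The step I expect to be the main obstacle is precisely the chain of substitutions recasting the critical-point condition in Lambert form together with the branch selection: one must track the range of the auxiliary variable $z$ to conclude that it is $W_{-1}$ (not $W_0$) that is forced by the constraint $t\in(0,1/\theta)$; the remaining steps are routine calculus.
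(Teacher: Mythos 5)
Your proposal is correct and follows essentially the same route as the paper: the same reduction to minimizing $f(t)=\bigl(c-k\log(1-\theta t)\bigr)/t$ on $(0,1/\theta)$, the same identification of the critical point via the $W_{-1}$ branch of the Lambert function (with the same argument $-e^{-1}(1-\alpha)^{1/k}\in(-\tfrac1e,0)$ and the same branch-selection reasoning from $t<1/\theta$), and the same final value $-k\theta\,W_{-1}\bigl(-e^{-1}(1-\alpha)^{1/k}\bigr)$. The differences are only in bookkeeping, and in fact slightly tighten the argument: you justify the global minimum via the endpoint limits and the monotonicity of $h(t)=t^2f'(t)$ (the paper instead verifies $f''(t_*)\ge0$), and you evaluate $f(t^*)$ through the substitutions $u=1/z$, $z=-W_{-1}(\cdot)$ rather than plugging the explicit $t_*$ back in and simplifying.
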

\begin{proof}
    The moment-generating function of the gamma distribution is given by
    \begin{equation}
        m_X(t) = (1-\theta t)^{-k}, \quad t < \frac{1}{\theta}.\nonumber
    \end{equation}
    Then
    \begin{align*}
        \EVaR_\alpha(X) &= \inf_{t \in (0,\theta^{-1})}\frac{1}{t}\Bigl(-\log(1-\alpha) - k \log(1-\theta t)\Bigr)\\
        &= \inf_{t \in (0,\theta^{-1})}\frac{b -k\log(\frac{1}{\theta}-t)}{t},\nonumber
    \end{align*}
    where $b: = -\log(1-\alpha) - k \log\theta$.

    Let us denote $f(t) := \frac{1}{t} (b -k\log(\frac{1}{\theta}-t))$.
    We are looking for the point of local minimum of the function $f$ on $(0,\frac1\theta)$. Its derivative equals
    \[
        f'(t) = \frac{\frac{\theta k t}{1-\theta t} + k \log(\frac{1}{\theta} -t) - b}{t^2}.
    \]
    The condition $f'(t) = 0$ leads to the equation
    \[
     k \log\left(\frac{1}{\theta} -t\right)=b-\frac{\theta k t}{1-\theta t},
    \]
    which is equivalent to
    \[
        \frac{1-\theta t}{\theta} = \exp\left\{\frac{b}{k}+1\right\} \exp\left\{-\frac{1}{1-\theta t}\right\},
    \]
    or
    \begin{equation}\label{eq:equation-gamma}
        -\frac{1}{1-\theta t}\exp\left\{-\frac{1}{1-\theta t}\right\} = -\frac{1}{\theta} \exp\left\{-\frac{b}{k}-1\right\}.
    \end{equation}
    The equation \eqref{eq:equation-gamma} can be solved using the Lambert function, since its right-hand side is not less than $-\frac{1}{e}$:
    \[
        -\frac{1}{\theta}\exp\left\{-\frac{b}{k}-1\right\} = -e^{-1}(1-\alpha)^{\frac{1}{k}} \geq -e^{-1}.
    \]
    We see from the condition $t<\frac1\theta$ that $-\frac{1}{1-\theta t}\leq -1$, hence, the branch $W_{-1}$ of the Lambert function should be chosen. We arrive at the equation
    \[
        -\frac{1}{1-\theta t} = W_{-1}\left(-e^{-1}(1-\alpha)^{\frac{1}{k}} \right),
    \]
    its solution is
    \[
        t_* := \theta^{-1} + \frac{1}{\theta W_{-1}(-e^{-1}(1-\alpha)^{\frac{1}{k}})}\nonumber
    \]
    Let us check the sufficient conditions for a local minimum. The second derivative equals
    \begin{equation}\label{eq:secdergamma}
        f''(t_*) = \frac{1}{t_*^3}\left(2\log\frac{1}{1-\alpha} + \frac{\theta kt_*(3\theta t_*-2)}{(1-\theta t_*)^2} - 2k\log(1-\theta t_*)\right).
    \end{equation}
    Let $z:=-e^{-1}(1-\alpha)^{\frac{1}{k}}$.
    Substituting the value of $t_*$ into \eqref{eq:secdergamma} and applying \eqref{ln}, we get
        \begin{align*}
          f''(t_*) & = \frac{\theta^3(W_{-1}(z))^3}{(W_{-1}(z)+1)^3}\biggl(2\log\left(\frac{1}{1-\alpha}\right) + k(W_{-1}(z))^2 \\*
          &\qquad\qquad\qquad\qquad\quad
          + 4W_{-1}(z) + 2\log(-W_{-1}(z))\biggr) \\
          &=\frac{\theta^3(W_{-1}(z))^3k(W_{-1}(z)+1)^2}{(W_{-1}(z)+1)^3} \geq 0,
        \end{align*}
        because $W_{-1}(z) \le -1$.
        Thus we have proved that the minimum of $f$ is achieved at the point $t=t_*$. After substituting $t_*$ into $f(t)$ and simplifying the resulting expression, we arrive at \eqref{eq:evar-gamma}.
        The continuity of $\EVaR_\alpha(X)$ follows from the continuity of $W_{-1}$.
    \end{proof}
\begin{figure}[t]
    \centering
    \includegraphics[scale=0.32]{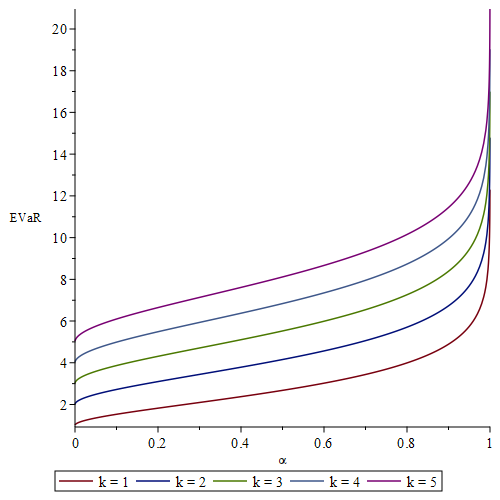}
    \includegraphics[scale=0.32]{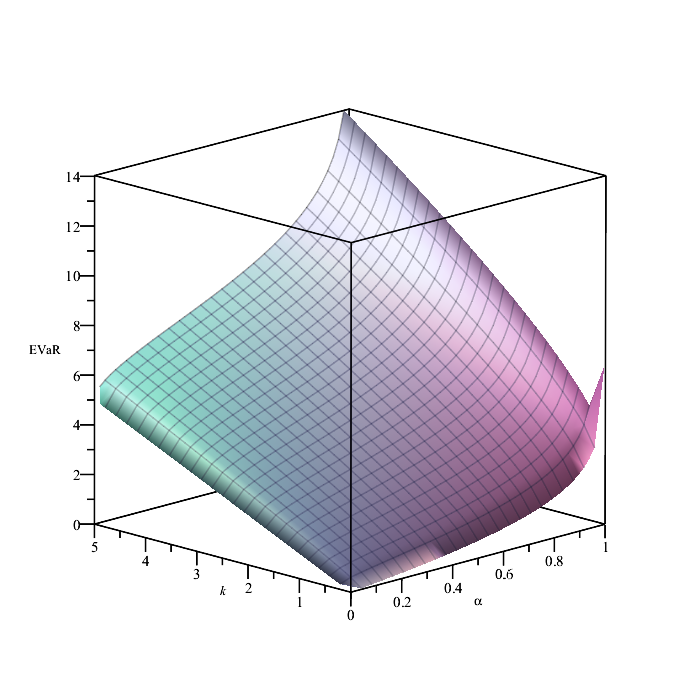}
    \caption{
        EVaR for Gamma distribution with parameter $\theta=1$.}
\end{figure}
\begin{Corollary}
    Let $X \sim \Exp(\lambda)$. Since $\Exp(\lambda) = G(1,\frac{1}{\lambda})$, we see that for all $\alpha \in (0,1)$,
    \begin{equation}
        \EVaR_\alpha(X) = - \frac{1}{\lambda} W_{-1}\left(-\frac{1}{e}(1-\alpha)\right).\nonumber
    \end{equation}
\end{Corollary}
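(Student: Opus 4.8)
The plan is to recognize the exponential law as a one-parameter specialization of the gamma family and then simply invoke the gamma theorem. First I would confirm the identification $\Exp(\lambda)=G(1,\tfrac1\lambda)$ at the level of moment-generating functions: the exponential MGF is $m_X(t)=\frac{\lambda}{\lambda-t}=\bigl(1-\tfrac{t}{\lambda}\bigr)^{-1}$ for $t<\lambda$, which is exactly the gamma MGF $(1-\theta t)^{-k}$ evaluated at $k=1$, $\theta=\tfrac1\lambda$. (One could equally match the densities from Appendix~\ref{sec:pdfs}; the only point needing care is that the scale parametrization $G(k,\theta)$ with mean $k\theta$ is the one used in \eqref{eq:evar-gamma}, so that $G(1,\tfrac1\lambda)$ indeed has mean $\tfrac1\lambda$, consistent with $\Exp(\lambda)$.)

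Having made this identification, I would substitute $k=1$ and $\theta=\tfrac1\lambda$ directly into formula \eqref{eq:evar-gamma}. Since $\tfrac1k=1$ we have $(1-\alpha)^{1/k}=1-\alpha$, and $k\theta=\tfrac1\lambda$, so \eqref{eq:evar-gamma} becomes
\[
    \EVaR_\alpha(X) = -\frac{1}{\lambda}\,W_{-1}\!\left(-\frac{1}{e}(1-\alpha)\right),
\]
which is the claimed formula. Validity for all $\alpha\in(0,1)$ and the choice of the branch $W_{-1}$ are inherited verbatim from the gamma theorem, since $-e^{-1}(1-\alpha)\in[-e^{-1},0)$ for $\alpha\in(0,1)$.

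There is essentially no obstacle here: the statement is a corollary precisely because the entire minimization work has already been carried out in the proof of the gamma theorem, and the exponential case is obtained by setting the shape parameter to $1$. The only thing worth a sentence of justification is the parametrization check described above, so that the substitution $k=1,\ \theta=1/\lambda$ is the correct one; everything else is immediate.
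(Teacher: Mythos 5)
Your argument is exactly the paper's: the corollary is stated there as an immediate specialization of the gamma formula \eqref{eq:evar-gamma} via the identification $\Exp(\lambda)=G(1,\tfrac{1}{\lambda})$, which is precisely the substitution $k=1$, $\theta=\tfrac1\lambda$ you carry out. Your extra check of the parametrization and of the branch condition $-e^{-1}(1-\alpha)\in[-e^{-1},0)$ is correct and harmless, but nothing beyond the paper's one-line justification is needed.
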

\begin{Corollary}
    Suppose $X \sim \chi^2(k)$ has a chi-squared distribution. Since $X \sim G(\frac{k}{2}, 2)$, we see that for all $\alpha \in
    (0,1)$,
    \begin{equation}
        \EVaR_\alpha(X) = - kW_{-1}\left(-\frac{1}{e}(1-\alpha)^{\frac{2}{k}}\right).\nonumber
    \end{equation}
\end{Corollary}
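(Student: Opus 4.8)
The plan is to reduce the statement entirely to the gamma case already treated, so essentially no new analysis is needed. First I would recall that the chi-squared distribution with $k$ degrees of freedom coincides with the gamma distribution $G(k/2,2)$; this is most quickly verified by comparing moment-generating functions, since $m_{\chi^2(k)}(t) = (1-2t)^{-k/2}$ for $t<\tfrac12$, which is exactly $(1-\theta t)^{-\kappa}$ with shape parameter $\kappa = k/2$ and scale parameter $\theta = 2$.

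Next I would invoke the gamma theorem, whose formula reads $\EVaR_\alpha(X) = -\kappa\theta\, W_{-1}\!\bigl(-e^{-1}(1-\alpha)^{1/\kappa}\bigr)$ for $X\sim G(\kappa,\theta)$. Substituting $\kappa = k/2$ and $\theta = 2$ gives
\[
\EVaR_\alpha(X) = -\frac{k}{2}\cdot 2 \cdot W_{-1}\!\left(-\frac{1}{e}(1-\alpha)^{2/k}\right) = -k\, W_{-1}\!\left(-\frac{1}{e}(1-\alpha)^{2/k}\right),
\]
which is precisely the claimed identity. The admissibility of the argument of $W_{-1}$, i.e.\ that $-e^{-1}(1-\alpha)^{2/k} \in [-\tfrac1e,0)$ for $\alpha\in(0,1)$, is inherited directly from the gamma case (it is the condition checked there with exponent $1/\kappa = 2/k$), and continuity in $\alpha$ likewise follows from the continuity of $W_{-1}$ on $[-\tfrac1e,0)$.

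There is no real obstacle here: the only thing to be careful about is the bookkeeping of which parameter of the gamma family plays the role of the shape and which the scale, so that the exponent in $(1-\alpha)^{2/k}$ and the prefactor $-k$ come out correctly. Once the identification $\chi^2(k)=G(k/2,2)$ is stated, the corollary is immediate.
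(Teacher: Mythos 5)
Your proposal is correct and matches the paper's approach exactly: the corollary is obtained by identifying $\chi^2(k)$ with $G(\frac{k}{2},2)$ and substituting shape $\frac{k}{2}$ and scale $2$ into the gamma formula \eqref{eq:evar-gamma}, which is all the paper does. The admissibility of the argument of $W_{-1}$ and the continuity in $\alpha$ are indeed inherited from the gamma case, as you note.
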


\subsection{Laplace distribution}
\begin{Theorem}
    Let $X \sim L(\mu, b)$ have a Laplace distribution. Then for all $\alpha \in (0,1)$,
    \begin{equation}\label{eq:evar-lapl}
        \EVaR_\alpha(X) = \mu - bW_{-1}(\gamma)\left(1+\frac{2}{W_{-1}(\gamma)}\right)^{\frac12},
    \end{equation}
    where $\gamma=-2e^{-2}(1-\alpha)$.
    The expression for $\EVaR_\alpha(X)$ is continuous in $\alpha$.
\end{Theorem}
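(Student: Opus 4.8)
The plan is to mimic the argument used for the gamma distribution. First I would exploit translation invariance: by property~2) of Definition~\ref{rm}, $\EVaR_\alpha(X) = \mu + \EVaR_\alpha(X-\mu)$ and $X-\mu \sim L(0,b)$, so it suffices to treat $\mu = 0$. The moment-generating function of $L(0,b)$ is $m_X(t) = (1-b^2t^2)^{-1}$ on $(-\tfrac1b,\tfrac1b)$, so Definition~\ref{df:evar} yields
\[
\EVaR_\alpha(X) = \inf_{t\in(0,1/b)} f(t), \qquad f(t) := \frac{-\log(1-\alpha) - \log(1-b^2t^2)}{t}.
\]
Since $-\log(1-\alpha) > 0$, the function $f$ is continuous on $(0,1/b)$ and tends to $+\infty$ as $t\to0^+$ (the numerator tends to $-\log(1-\alpha)>0$) and as $t\to(1/b)^-$ (the numerator tends to $+\infty$); hence the infimum is a minimum attained at an interior critical point.

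Next I would compute $f'$ and reduce $f'(t)=0$ to
\[
\frac{2b^2t^2}{1-b^2t^2} = -\log(1-\alpha) - \log(1-b^2t^2).
\]
Putting $w := (1-b^2t^2)^{-1}$ transforms this into $2w - \log w = 2 - \log(1-\alpha)$, equivalently $(-2w)\,e^{-2w} = -2e^{-2}(1-\alpha) = \gamma$. Here one must check $\gamma \ge -2e^{-2} > -\tfrac1e$, the second inequality being equivalent to $2 < e$, so that the Lambert function applies. Because $1-b^2t^2 \in (0,1)$ forces $w > 1$, the argument $-2w$ lies below $-2 < -1$, so the branch $W_{-1}$ is the correct one; this gives $w = -\tfrac12 W_{-1}(\gamma)$ and therefore
\[
t_* = \frac1b\Bigl(1 + \tfrac{2}{W_{-1}(\gamma)}\Bigr)^{1/2}.
\]
One then verifies that $W_{-1}(\gamma) < -2$ for $\alpha\in(0,1)$, so that $1 + \tfrac{2}{W_{-1}(\gamma)} \in (0,1)$ and $t_* \in (0,1/b)$ is well defined.

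To confirm that $t_*$ is a minimizer I would compute $f''(t_*)$ and simplify it using the logarithmic identity \eqref{ln} (namely $\log(-W_{-1}(z)) = \log(-z) - W_{-1}(z)$) exactly as in the gamma computation; the resulting sign is non-negative because $W_{-1}(\gamma) \le -1$. Finally, substituting $t_*$ into $f$, rewriting $\log(1-b^2t_*^2) = \log\bigl(-2/W_{-1}(\gamma)\bigr)$ with the help of \eqref{ln}, and using the algebraic identity $-(W_{-1}(\gamma)+2) = -W_{-1}(\gamma)\bigl(1+\tfrac{2}{W_{-1}(\gamma)}\bigr)$, I obtain
\[
\EVaR_\alpha(X) = \mu - b\,W_{-1}(\gamma)\Bigl(1 + \tfrac{2}{W_{-1}(\gamma)}\Bigr)^{1/2},
\]
which is \eqref{eq:evar-lapl}; continuity in $\alpha$ then follows from the continuity of $W_{-1}$ on $[-\tfrac1e,0)$.

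I expect the main obstacle to be bookkeeping rather than ideas: the second-derivative test and the final back-substitution both require applying \eqref{ln} carefully so that the $\log(1-\alpha)$ and constant terms cancel, and one must keep track of the signs of $W_{-1}(\gamma)$, of $W_{-1}(\gamma)+2$, and of the quantity under the square root to be sure the simplifications are legitimate.
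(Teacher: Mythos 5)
Your argument is correct and follows essentially the same route as the paper's: the same objective function (you handle $\mu$ by translation invariance instead of carrying the term $\mu t/t=\mu$ along, which is equivalent), the same reduction of $f'(t)=0$ to $(-2w)e^{-2w}=\gamma$ with $w=(1-b^2t^2)^{-1}$, the same choice of the branch $W_{-1}$ justified by $w>1$ and $\gamma\ge-2e^{-2}>-\tfrac1e$, the same conclusion $W_{-1}(\gamma)<-2$, and the same back-substitution via \eqref{ln}. The only small caveat is in the second-derivative step: the simplified sign factor is $\bigl(W_{-1}(\gamma)+1\bigr)\bigl(W_{-1}(\gamma)+2\bigr)/t_*^3$, whose positivity needs the bound $W_{-1}(\gamma)<-2$ that you already established (or, alternatively, your boundary analysis plus uniqueness of the critical point), not merely $W_{-1}(\gamma)\le-1$.
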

\begin{proof}
    The moment-generating function equals
    \begin{equation}
        m_X(t) = \frac{e^{\mu t}}{1-b^2t^2}, \quad |t|<\frac{1}{b}.\nonumber
    \end{equation}
    Then
    \begin{align}
        \EVaR_\alpha(X)  =  \inf_{t \in (0,b^{-1})}\frac{1}{t}\left(a +\mu t - \log\left(1-b^2t^2\right)\right),\nonumber
    \end{align}
    where $a := -\log(1-\alpha)$. Let us find the minimum of the function
    \begin{equation}\label{eq:laplace-obj}
    f(t) = \frac{1}{t}\left(a +\mu t - \log\left(1-b^2t^2\right)\right).
    \end{equation}
    Its first derivative equals
    \[
        f'(t) = - \frac{a(b^2t^2-1) - b^2t^2(\log(1-b^2t^2)-2) + \log(1-b^2t^2)}{t^2(b^2t^2-1)}.
    \]
    From the condition $f'(t) = 0$, we get the equation
    \[
    a+2 - \frac{2}{1-b^2t^2} = \log(1-b^2t^2)
    \]       
        or
    \begin{equation}\label{eq:laplace-eq}
    -\frac{2}{1-b^2t^2} \exp\left\{-\frac{2}{1-b^2t^2}\right\} = -2e^{-a-2}
    \eqqcolon\gamma.
    \end{equation}
This equation can be solved via the Lambert function, because its right-hand side $\gamma = -2e^{-2}(1-\alpha) \geq -e^{-1}$.
Since $-\frac{2}{1-b^2t^2} < -1$, we see that the branch $W_{-1}$ should be chosen. Thus, we arrive at the following solution to \eqref{eq:laplace-eq}:
    \begin{equation}
        t_*= \frac{1}{b}\left(1+\frac{2}{W_{-1}(\gamma)}\right)^{\frac12}.\nonumber
    \end{equation}
Note that $W_{-1}(t)$ is a decreasing function at $(-e^{-1},0)$, therefore
    \begin{equation}
        W_{-1}(\gamma) = W_{-1}(-2e^{-2}(1-\alpha)) < W_{-1}(-2e^{-2}) = -2,\nonumber
    \end{equation}
so the expression under square root is positive and $t_*$ is well defined.

The second derivative of $f$ equals
    \begin{align*}
        f''(t_*) &= \frac{2}{t_*^3}\left(a+\frac{b^2t_*^2(3b^2t_*^2-1)}{(1-b^2t_*^2)^2} - \log(1-b^2t_*^2)\right) 
    \\
        &= \frac{1}{t_*^3} \left(2a+(W_{-1}(\gamma))^2 + 5W_{-1}(\gamma)+ 6 - 2\log 2 + 2\log(-W_{-1}(\gamma))\right)
    \\
        &= \frac{1}{t_*^3} \left((W_{-1}(z))^2 + 3W_{-1}(z)+ 2\right) > 0,
    \end{align*}
    where we have used the relation \eqref{ln}.
    Consequently, $t_*$ is a minimum point.   Substituting  $t_*$ into the function \eqref{eq:laplace-obj} and simplifying the resulting expression,  we   get \eqref{eq:evar-lapl}.
\end{proof}
\begin{figure}[t]
    \centering
    \includegraphics[scale=0.5]{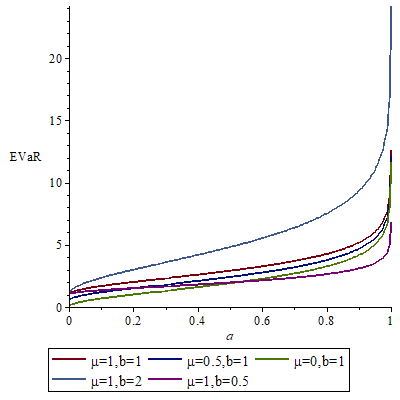}
    \caption{
        EVaR for Laplace distribution.}
\end{figure}

\subsection{Inverse Gaussian distribution}

\begin{Theorem}
    Let $X \sim \IG(\mu,\lambda)$ have an inverse Gaussian distribution, or so-called Wald distribution. Then
    \[
    \EVaR_\alpha(X) = \mu\left(\delta+\sqrt{\delta^2-1}\right),
    \]
    where $\delta = 1-\frac{\mu}{\lambda}\log(1-\alpha)$.
\end{Theorem}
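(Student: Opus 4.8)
The plan is to start from the moment-generating function of the inverse Gaussian law,
\[
m_X(t) = \exp\left\{\frac{\lambda}{\mu}\left(1 - \sqrt{1 - \tfrac{2\mu^2 t}{\lambda}}\,\right)\right\}, \qquad t \le \frac{\lambda}{2\mu^2},
\]
and substitute it into Definition~\ref{df:evar}. Writing $a := -\log(1-\alpha)\ge 0$, this reduces the problem to minimizing
\[
f(t) = \frac{1}{t}\left(a + \frac{\lambda}{\mu}\Bigl(1 - \sqrt{1 - \tfrac{2\mu^2 t}{\lambda}}\,\Bigr)\right)
\]
over $t\in(0,\lambda/(2\mu^2)]$. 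In contrast with the preceding theorems, this minimization will \emph{not} require the Lambert function.

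The key step is the change of variable $u = \sqrt{1 - 2\mu^2 t/\lambda}$, a strictly decreasing bijection of $(0,\lambda/(2\mu^2)]$ onto $[0,1)$ with $t = \lambda(1-u^2)/(2\mu^2)$. Combining the two fractions in $f$ and using the identity $\mu a + \lambda = \lambda\delta$, one obtains $f(t(u)) = \dfrac{2\mu(\delta - u)}{1-u^2}$, so that $\EVaR_\alpha(X) = 2\mu \inf_{u\in[0,1)} h(u)$ with $h(u) = (\delta - u)/(1-u^2)$ and $\delta \ge 1$. I would then study $h$ on $[0,1)$: the numerator of $h'(u)$ is $-u^2 + 2\delta u - 1$, which is negative at $u=0$ and nonnegative as $u\to 1^-$ (since $\delta\ge1$), so $h$ decreases then increases, with unique interior critical point the smaller root $u_* = \delta - \sqrt{\delta^2-1}$ of $u^2 - 2\delta u + 1 = 0$ (the other root $\delta + \sqrt{\delta^2-1}\ge1$ lies outside the domain). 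Since $h(u)\to+\infty$ as $u\to1^-$, the infimum is attained at $u_*$; for $\alpha=0$ (i.e.\ $\delta=1$) it is an infimum approached as $u\to1$, and the final formula is still valid by the same computation.

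It remains to evaluate $h(u_*)$. Rather than substituting directly, I would exploit the defining relation $u_*^2 = 2\delta u_* - 1$, which gives $1-u_*^2 = 2(1-\delta u_*)$, together with $\delta - u_* = \sqrt{\delta^2-1}$ and $1-\delta u_* = \sqrt{\delta^2-1}\,(\delta-\sqrt{\delta^2-1})$; this yields $h(u_*) = \dfrac{1}{2(\delta-\sqrt{\delta^2-1})} = \dfrac{\delta+\sqrt{\delta^2-1}}{2}$ after rationalizing. Multiplying by $2\mu$ produces $\EVaR_\alpha(X) = \mu\bigl(\delta+\sqrt{\delta^2-1}\bigr)$. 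The only real obstacle is the bookkeeping in this last simplification and the verification that the substitution indeed maps the admissible range of $t$ onto $[0,1)$; once these are settled, the argument is a routine one-variable optimization.
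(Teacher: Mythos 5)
Your argument is correct and is essentially the paper's proof: the paper works with the same quantity $z(t)=\sqrt{1-2\mu^2t/\lambda}$, reduces the stationarity condition to the same quadratic $z^2-2\delta z+1=0$, selects the smaller root $\delta-\sqrt{\delta^2-1}$ by the range restriction, and evaluates the objective there. The only difference is cosmetic -- you change variables in the objective itself and minimize $h(u)=(\delta-u)/(1-u^2)$ on $[0,1)$, whereas the paper keeps $t$ as the variable and uses $z(t)$ only to simplify $f'(t)$; your treatment of the boundary case $\delta=1$ is also consistent with the stated formula.
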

\begin{proof}
    The corresponding moment-generating function equals:
    \[
        m_X(t) = \exp\left\{\frac{\lambda}{\mu}\left(1-\left(1-\frac{2\mu^2t}{\lambda}\right)^{\frac12}\right)\right\},
        \quad t< \frac{\lambda}{2\mu^2}.
    \]
    Substituting it into the definition of $\EVaR_\alpha$, we get the following optimization problem:
    \begin{equation}
        \EVaR_{\alpha}(X) = \inf_{t > 0}\frac{1}{t}\left(-\log(1-\alpha) + \frac{\lambda}{\mu}\left(1-\left(1-\frac{2\mu^2t}{\lambda}\right)^{\frac12}\right)\right).\nonumber
    \end{equation}
  Let denote $c:=-\log(1-\alpha)$.
    Let us find the minimum point of the objective function
    \[
    f(t) = \frac{1}{t}\left(c + \frac{\lambda}{\mu}-\frac{\lambda}{\mu}\left(1-\frac{2\mu^2t}{\lambda}\right)^{\frac12}\right)
    = \frac{1}{t}\left(c + \frac{\lambda}{\mu}-\frac{\lambda}{\mu}z(t)\right),
    \]
    where $z(t):=(1-\frac{2\mu^2t}{\lambda})^{1/2}$.
    Then the derivative of $f(t)$ equals
    \begin{equation}\label{eq:wald-der}
    f'(t) = \frac{1}{t^2} \left(-\frac{\lambda}{\mu}z'(t)t - c -\frac{\lambda}{\mu} + \frac{\lambda}{\mu} z(t)\right)
    \end{equation}
    where
    \[
    z'(t) = - \frac{\mu^2}{\lambda}\left(1-\frac{2\mu^2t}{\lambda}\right)^{-1/2}
    = - \frac{\mu^2}{\lambda z(t)}
    \]
    Using this formula together with the inverse relation 
    $t = \frac{\lambda}{2\mu^2}(1-z^2(t))$, we can transform \eqref{eq:wald-der} as follows:
    \begin{align*}
        f'(t) &= \frac{1}{t^2} \left(\frac{\lambda}{2 \mu z(t)}\left(1-z^2(t)\right) - c -\frac{\lambda}{\mu} + \frac{\lambda}{\mu} z(t)\right)
        \\
        &= \frac{\lambda}{2\mu t^2 z(t)}\left(z^2(t) - 2\delta z(t) +1\right),
    \end{align*}
    where $\delta = \frac{c\mu}{\lambda}+1 \ge 1$.
    We see that the condition $f'(t) = 0$ is satisfied if and only if
    $z(t) = \delta \pm \sqrt{\delta^2 - 1}$.
    The root with ``plus'' sign is bigger than 1, which is not possible for $t>0$.
    Therefore, we need to consider only the value
    \[
    z_* = z(t_*) = \delta - \sqrt{\delta^2 - 1},
    \]
    which corresponds to
    \[
    t_* = \frac{\lambda}{2\mu^2}\left(1-\left( \delta - \sqrt{\delta^2 - 1}\right)^2\right)
    \in \left(0,\frac{\lambda}{2\mu^2}\right).
    \]
    Note that $z(t)$ is a decreasing function and
    \begin{itemize}
        \item $f'(t) < 0$ for all $z>z_*$ (i.e.\ for all $0< t < t_*$),
        \item $f'(t) > 0$ for all $z<z_*$ (i.e.\ for all $t_*< t < \frac{\lambda}{2\mu^2}$).
    \end{itemize}
   This means that the minimal value of $f(t)$ is achieved at $t=t_*$.
   It equals
   \begin{align*}
   f(t_*) &= \frac{1}{t_*}\left(c + \frac{\lambda}{\mu}-\frac{\lambda}{\mu}z_*\right)
   = \frac{\lambda}{\mu}\cdot\frac{\delta-z_*}{t_*}
   = \frac{2\mu\sqrt{\delta^2 - 1}}{1-\left( \delta - \sqrt{\delta^2 - 1}\right)^2}
   \\
   &= \frac{\mu}{\delta -\sqrt{\delta^2 - 1} }
   = \mu\left(\delta +\sqrt{\delta^2 - 1} \right).\qedhere
   \end{align*}
\end{proof}
\begin{figure}[t]
\centering
    \includegraphics[scale=0.36]{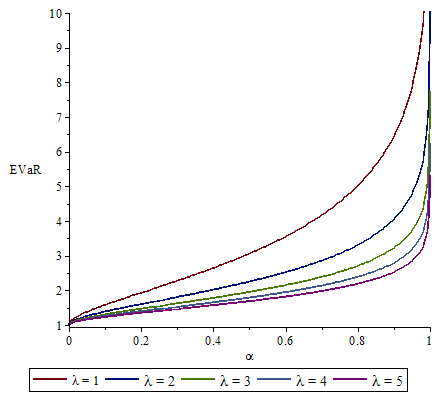}
    \includegraphics[scale=0.36]{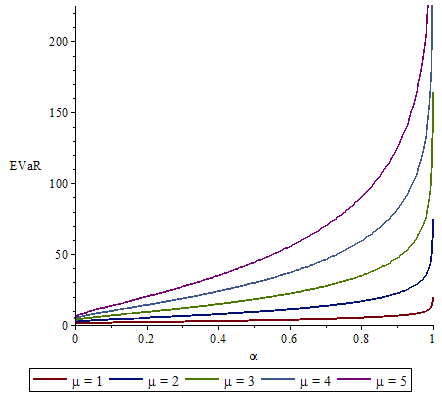}\\
    \includegraphics[scale=0.4]{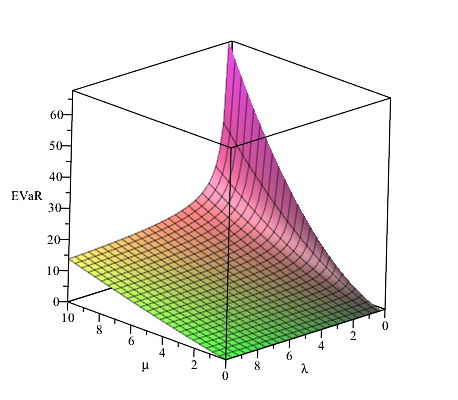}
    \includegraphics[scale=0.4]{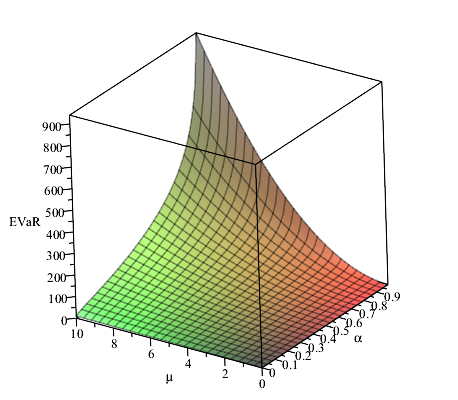}
    \caption{
        EVaR for Wald distribution. The first figure is for fixed $\mu=1$, the second figure is for fixed $\lambda=1$, third for fixed $\alpha=0.05$ and the last one for fixed $\lambda=1$.}
\end{figure}

\subsection{Normal inverse Gaussian distribution}
\label{ssec:nig}
Consider the class of normal inverse Gaussian distributions $\NIG(\alpha,
\beta, \mu, \delta)$. This is a flexible system of distributions
that includes distributions with heavy tails and skewed
distributions.
\begin{Theorem}
    Let $X \sim \NIG(\alpha, \beta, \mu, \delta)$ be a normal inverse Gaussian distribution, $0\le |\beta| < \alpha$, $\mu\in\real$, $\delta>0$.
    Then for the level $\alpha' \in [0,1)$:
    \begin{equation}\label{eq:evar-nig}
    \EVaR_{\alpha'}(X) = \mu + \frac{\delta}{t_*}
    \left(\varphi - \sqrt{\alpha^2-(\beta+t_*)^2}\right),
    \end{equation}
    where
    \[
    \varphi = -\frac1\delta \log(1-\alpha') + \sqrt{\alpha^2 - \beta^2},
    \quad
    \psi = \sqrt{\varphi^2 - \alpha^2 + \beta^2},
    \quad
    t_* = \frac{(\alpha^2 - \beta^2) \psi}{\alpha\varphi + \beta\psi}.
    \]
\end{Theorem}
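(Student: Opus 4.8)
The plan is to follow the scheme of the preceding theorems, with no recourse to the Lambert function: substitute the moment-generating function of the $\NIG$ distribution into Definition~\ref{df:evar}, reduce the computation of $\EVaR_{\alpha'}(X)$ to a one-dimensional minimization over the interval on which the moment-generating function is finite, solve the first-order condition in closed form, and verify that the resulting critical point is the global minimizer. Concretely, recall that for $X\sim\NIG(\alpha,\beta,\mu,\delta)$ with $0\le|\beta|<\alpha$,
\[
m_X(t)=\exp\left\{\mu t+\delta\left(\sqrt{\alpha^2-\beta^2}-\sqrt{\alpha^2-(\beta+t)^2}\right)\right\},\qquad -\alpha-\beta<t<\alpha-\beta .
\]
Writing $c:=-\log(1-\alpha')\ge0$, $\varphi=c/\delta+\sqrt{\alpha^2-\beta^2}$ and $g(t):=\sqrt{\alpha^2-(\beta+t)^2}$, Definition~\ref{df:evar} gives
\[
\EVaR_{\alpha'}(X)=\inf_{0<t<\alpha-\beta}f(t),\qquad f(t):=\mu+\frac{\delta}{t}\bigl(\varphi-g(t)\bigr).
\]
Since $c\ge0$, $\varphi\ge\sqrt{\alpha^2-\beta^2}>0$, so $\psi=\sqrt{\varphi^2-\alpha^2+\beta^2}$ is well defined with $\psi\ge0$. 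Differentiating, with $g'(t)=-(\beta+t)/g(t)$ and the identity $g(t)^2+t(\beta+t)=\alpha^2-\beta^2-\beta t$, one obtains
\[
f'(t)=\frac{\delta}{t^2g(t)}\,h(t),\qquad h(t):=\alpha^2-\beta^2-\beta t-\varphi\,g(t).
\]

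The first-order condition $h(t)=0$ reads $\varphi\,g(t)=\alpha^2-\beta^2-\beta t$; squaring (its legitimacy being confirmed a posteriori, once the chosen root is seen to make both sides positive) turns it into
\[
(\beta^2+\varphi^2)\,t^2+2\beta\psi^2\,t-(\alpha^2-\beta^2)\psi^2=0 .
\]
The key algebraic step is the identity $\beta^2\psi^2+(\beta^2+\varphi^2)(\alpha^2-\beta^2)=\alpha^2\varphi^2$, which collapses the discriminant to $\alpha^2\varphi^2$; the two roots are then $\bigl(-\beta\psi^2\pm\alpha\varphi\psi\bigr)/(\beta^2+\varphi^2)$, of which only the one with the plus sign is positive, because $\alpha\varphi>\psi|\beta|$ (here $\alpha>|\beta|$ and $\varphi>\psi$). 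That root equals the $t_*$ of the statement, since $\psi(\alpha\varphi-\beta\psi)/(\beta^2+\varphi^2)=(\alpha^2-\beta^2)\psi/(\alpha\varphi+\beta\psi)$. A direct substitution then gives $\alpha^2-\beta^2-\beta t_*=(\alpha^2-\beta^2)\alpha\varphi/(\alpha\varphi+\beta\psi)>0$ and hence $g(t_*)=(\alpha^2-\beta^2)\alpha/(\alpha\varphi+\beta\psi)>0$, which both justifies the squaring step and shows $t_*\in(0,\alpha-\beta)$.

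To conclude that $t_*$ is the global minimum I would observe that $h$ is strictly convex on $(0,\alpha-\beta)$, being an affine function minus $\varphi$ times the strictly concave map $g$; since $h(0^{+})=\sqrt{\alpha^2-\beta^2}\bigl(\sqrt{\alpha^2-\beta^2}-\varphi\bigr)<0$ for $\alpha'\in(0,1)$ while $h\bigl((\alpha-\beta)^{-}\bigr)=\alpha(\alpha-\beta)>0$, strict convexity forces $h<0$ on $(0,t_*)$ and $h>0$ on $(t_*,\alpha-\beta)$, hence --- as $\delta/(t^2g(t))>0$ --- the same sign pattern holds for $f'$, so $f$ decreases then increases and attains its minimum at $t_*$. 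Substituting $t_*$ back and using the value of $g(t_*)$ yields $\EVaR_{\alpha'}(X)=\mu+\delta(\alpha\psi+\beta\varphi)/(\alpha^2-\beta^2)$, which is a rewriting of \eqref{eq:evar-nig}; the boundary case $\alpha'=0$ (where $\psi=0$ and $t_*=0$) is handled by continuity, the right-hand side of \eqref{eq:evar-nig} then reducing to $\mu+\delta\beta/\sqrt{\alpha^2-\beta^2}=\ex X=\EVaR_0(X)$. I expect the main obstacle to be purely computational: assembling the clean quadratic, recognizing the identity that collapses its discriminant, selecting the correct root and matching it to the stated form of $t_*$, and performing the back-substitution into $f(t_*)$ without slips; the only genuinely structural ingredient --- strict convexity of $h$, which delivers minimality without a cumbersome second-derivative calculation --- is short, and the remainder is bookkeeping of the same kind already carried out for the Wald distribution above.
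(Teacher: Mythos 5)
Your proposal is correct and follows essentially the same route as the paper: substitute the NIG moment-generating function, reduce to minimizing $f(t)=\mu+\frac{\delta}{t}\bigl(\varphi-\sqrt{\alpha^2-(\beta+t)^2}\bigr)$ on $(0,\alpha-\beta)$, turn the first-order condition into the same quadratic $(\varphi^2+\beta^2)t^2+2\beta\psi^2 t-(\alpha^2-\beta^2)\psi^2=0$, collapse its discriminant with the identity $\alpha^2\varphi^2-\beta^2\psi^2=(\alpha^2-\beta^2)(\varphi^2+\beta^2)$, and select the positive root $t_*$. The only (equally valid) variations are cosmetic: you justify the squaring a posteriori and establish the sign change of $f'$ via strict convexity of $h(t)=\alpha^2-\beta^2-\beta t-\varphi\sqrt{\alpha^2-(\beta+t)^2}$, where the paper instead multiplies by the conjugate and reads the sign off the quadratic, and you additionally treat the boundary case $\alpha'=0$ by continuity.
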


\begin{Remark}\label{rem:nig}
\begin{enumerate}[label=(\roman*)]
    \item We have excluded the case $\alpha = \beta = 0$, because in this case the moment-generating function is not defined for any $t\ne0$.
    \item The parameter $\psi$ is well defined, since
    \[
    \varphi^2 - \alpha^2 + \beta^2 = 
    \left(\tfrac1\delta \log(1-\alpha')\right)^2
    -\tfrac2\delta \log(1-\alpha') \sqrt{\alpha^2 - \beta^2} \ge0.
    \]
    \item Evidently, $\varphi \ge \psi \ge0$.
    \item\label{nig-iii} $t_* \in [0,\alpha-\beta]$, because
    $t_* = (\alpha - \beta)\frac{\alpha\psi +\beta \psi}{\alpha\varphi + \beta\psi}$
    and $\frac{\alpha\psi +\beta \psi}{\alpha\varphi + \beta\psi} \in [0,1]$ for $\varphi \ge \psi \ge0$.
    \item It follows from \ref{nig-iii} that \eqref{eq:evar-nig} is well defined.
\end{enumerate}
\end{Remark}

\begin{proof}
The corresponding moment-generating function is equal to
\begin{equation}
    m_X(t) = \exp\left\{\mu t+\delta \sqrt{\alpha^2 - \beta^2} - \delta \sqrt{\alpha^2 - (\beta + t)^2}\right\},
    \quad t \in [-\alpha-\beta, \alpha-\beta],
\end{equation}
see, e.g., \cite{barndorff1997normal}.
Then we get the following optimization problem:
\begin{equation}
    \EVaR_{\alpha'}(X) = \inf_{t \in (0, \alpha-\beta]} f(t),\nonumber
\end{equation}
where
\begin{align}
f(t) &\coloneqq \frac{1}{t}\left(\log\frac{1}{1-\alpha'} + \mu t+\delta \sqrt{\alpha^2 - \beta^2} - \delta \sqrt{\alpha^2 - (\beta + t)^2}\right)
\notag\\
&= \mu + \frac{\delta}{t} \left(\varphi - \sqrt{\alpha^2-(\beta+t)^2}\right).
\label{eq:nig-f}
\end{align}
Then we find the minimum point for positive $t$
\begin{align*}
f'(t) &= \frac{\delta}{t^2} \left( \frac{ (\beta+t) t}{\sqrt{\alpha^2-(\beta+t)^2}} -\varphi + \sqrt{\alpha^2-(\beta+t)^2}\right)
\\
&= \frac{\delta \left( (\beta+t) t -\varphi \sqrt{\alpha^2-(\beta+t)^2} + \alpha^2-(\beta+t)^2\right)}{t^2{\sqrt{\alpha^2-(\beta+t)^2}}} 
\\
&= \frac{\delta \left(\alpha^2-\beta^2 - \beta t -\varphi \sqrt{\alpha^2-(\beta+t)^2}\right)}{t^2{\sqrt{\alpha^2-(\beta+t)^2}}} 
\\
&= \frac{\delta \left((\alpha^2-\beta^2 - \beta t)^2 -\varphi^2 (\alpha^2-(\beta+t)^2)\right)}{t^2{\sqrt{\alpha^2-(\beta+t)^2}} \left(\alpha^2-\beta^2 - \beta t +\varphi \sqrt{\alpha^2-(\beta+t)^2}\right)} 
\end{align*}
Note that the denominator is positive for all $t\in[0,\alpha-\beta]$.
Indeed,
$\alpha^2-\beta^2 - \beta t \ge \alpha^2-\beta^2 - \beta (\alpha - \beta)
=\alpha (\alpha - \beta)\ge0$.
Hence, the sign of $f'(t)$ coincides with that of the function
\begin{align*}
g(t) &= (\alpha^2-\beta^2 - \beta t)^2 -\varphi^2 (\alpha^2-(\beta+t)^2)
\\
&= t^2 \left(\varphi^2 + \beta^2\right) + 2\beta \left(\varphi^2 - \alpha^2 + \beta^2\right) t - \left(\alpha^2 - \beta^2\right)\left(\varphi^2 - \alpha^2 + \beta^2\right)
\\
&= t^2 \left(\varphi^2 + \beta^2\right) + 2\beta \psi^2 t - \left(\alpha^2 - \beta^2\right)\psi^2.
\end{align*}
It is not hard to see that the quadratic equation $g(t) = 0$ has two solutions
\[
t = \frac{\psi(-\beta \psi \pm \alpha\phi)}{\varphi^2 + \beta^2},
\]
however, the solution with minus sign is obviously negative.
Let us consider the solution
\[
t_* = \frac{\psi(-\beta \psi +\alpha\phi)}{\varphi^2 + \beta^2}
= \frac{\psi(\alpha^2\phi^2 - \beta^2 \psi^2 )}{(\varphi^2 + \beta^2)(\beta \psi +\alpha\phi)}
= \frac{(\alpha^2 - \beta^2) \psi}{\alpha\varphi + \beta\psi},
\]
where we have used the relation
\[
\alpha^2\phi^2 - \beta^2 \psi^2 
= \alpha^2\phi^2 - \beta^2 (\phi^2 - \alpha^2 + \beta^2)
= \left(\alpha^2 - \beta^2\right) \left(\phi^2 + \beta^2\right)
\]
According to Remark \ref{rem:nig} \ref{nig-iii}, $t_*\in[0,\alpha-\beta]$.
Moreover, from the properties of the quadratic function we get that $g(t) < 0$ for $t\in[0, t_*)$ and  $g(t) > 0$ for $t\in(t_*,\alpha-\beta]$, and the derivative $f'(t)$ demonstrates the same behavior.
This means that $t_*$ is the minimum point of $f(t)$ on $[0,\alpha-\beta]$.
Substituting this value into \eqref{eq:nig-f}, we obtain \eqref{eq:evar-nig}.
\end{proof}
\begin{figure}
\centering
    \includegraphics[scale=0.3]{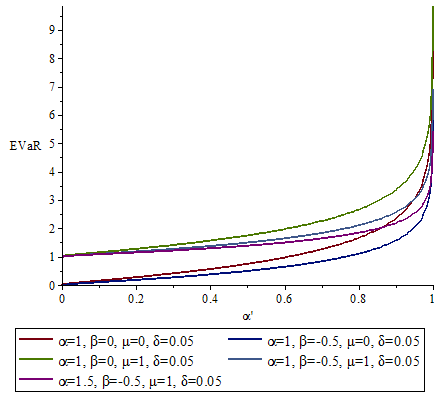}
    \includegraphics[scale=0.35]{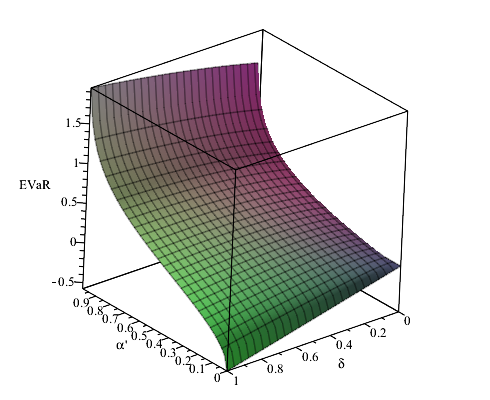}\\
    \includegraphics[scale=0.35]{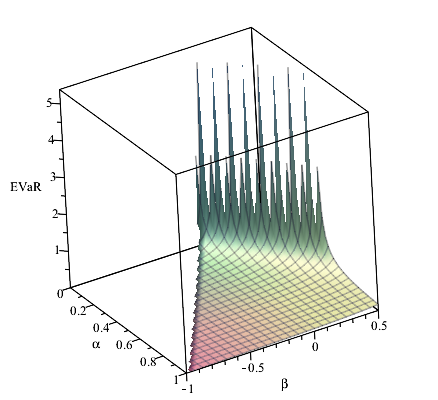}
    \includegraphics[scale=0.35]{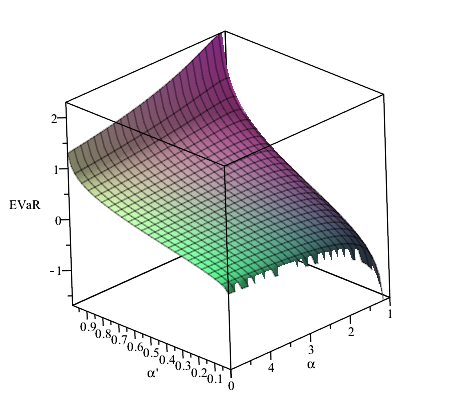}
    \caption{
        EVaR for NIG distribution. For the second figure $\alpha = 2, \beta=-1, \mu =0$, for the third figure $\alpha' = 0.05, \mu =0, \delta=0.05$, for the last figure $\beta = -1$, $\mu =0$, $\delta=1$.}
\end{figure}

\section{Concluding Remarks}
  There are not many models for EVaR described in the current scientific literature. Most often, only the standard normal distribution model is considered. The lack of extended  results in this area is due to the complexity of calculations of EVaR for the majority of the main distributions. Previously, only normal distribution was mostly considered. 

In this paper,   EVaR for several distributions was calculated via Lambert function.   The distributions that are most commonly used in statistical models and loss modelling were discussed, namely, Poisson, compound Poisson, gamma, exponential, chi-squared, Laplace,   inverse Gaussian and normal inverse Gaussian  distributions.

The result of this work brings the possibility of further studying the behaviour of EVaR for the already defined models, as well as expanding the models for which EVaR can be analytically calculated.

\appendix
\section[\appendixname~\thesection]{EVaR for normal distribution}
\label{sec:evar-normal}
Let us, for the reader's convenience,  provide the formula  of EVaR
for the normal distribution from \cite{Ah12}. Let us formulate the corresponding lemma with a brief proof.
\begin{Lemma}\label{th:evar-normal}
    Let $X \sim \Normal(\mu, \sigma^2)$ be a normal distribution. Then $\forall \alpha \in (0,1):$
    \begin{equation}\label{eq:evar-norm}
        \EVaR_\alpha(X) = \mu+\sigma\sqrt{-2\log(1-\alpha)}.
    \end{equation}
\end{Lemma}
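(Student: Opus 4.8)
The plan is to substitute the Gaussian moment-generating function $m_X(t) = \exp\{\mu t + \tfrac12\sigma^2 t^2\}$, which is finite for all $t\in\real$, into the definition \eqref{eq:evar-def} and to carry out the resulting one-dimensional minimization explicitly. Setting $a := -\log(1-\alpha)$, which is strictly positive for every $\alpha\in(0,1)$, the objective function becomes
\[
f(t) = \frac1t\left(a + \mu t + \tfrac12\sigma^2 t^2\right) = \mu + \frac{a}{t} + \frac{\sigma^2 t}{2},
\]
so that $\EVaR_\alpha(X) = \mu + \inf_{t>0}\left(\tfrac{a}{t} + \tfrac12\sigma^2 t\right)$, and the whole problem reduces to minimizing $t\mapsto \tfrac{a}{t} + \tfrac12\sigma^2 t$ over $t>0$.

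Next I would dispose of this minimization. The cleanest route is the AM--GM inequality: for every $t>0$,
\[
\frac{a}{t} + \frac{\sigma^2 t}{2} \ge 2\sqrt{\frac{a}{t}\cdot\frac{\sigma^2 t}{2}} = \sigma\sqrt{2a},
\]
with equality precisely when $\tfrac{a}{t} = \tfrac12\sigma^2 t$, that is, at $t^{\ast} = \sqrt{2a}/\sigma > 0$. (Equivalently, one may differentiate: $f'(t) = -a/t^2 + \sigma^2/2$ has its unique positive zero at $t^{\ast}$, and $f''(t) = 2a/t^3 > 0$ identifies it as the global minimizer on $(0,\infty)$; since $a>0$, the function $f$ blows up both as $t\downarrow 0$ and as $t\to\infty$, so the infimum is genuinely attained in the interior.) Either way, the infimum equals $\sigma\sqrt{2a}$.

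Combining the two computations gives
\[
\EVaR_\alpha(X) = \mu + \sigma\sqrt{2a} = \mu + \sigma\sqrt{-2\log(1-\alpha)},
\]
which is exactly \eqref{eq:evar-norm}. I do not expect any real obstacle in this argument; unlike the preceding theorems, the critical-point equation here is purely algebraic and no Lambert function is needed. The only points that merit an explicit line are the positivity of $a$ on $(0,1)$ — which both guarantees $t^{\ast}>0$ and that the infimum is a true minimum — and the elementary equality case in AM--GM (or, in the calculus variant, the sign of $f''$).
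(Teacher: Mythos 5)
Your proof is correct and follows essentially the same route as the paper: substitute the Gaussian moment-generating function into the definition and minimize the resulting one-variable function over $t>0$, obtaining the minimizer $t^{\ast}=\sqrt{-2\log(1-\alpha)}/\sigma$. The only cosmetic difference is that you offer AM--GM as an alternative to the differentiation the paper uses, which yields the same minimum value.
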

\begin{proof}
    The moment-generating function of the normal distribution is given by $m_X(t) = \exp\{\mu t+\frac{1}{2}t^2\sigma^2\}$.
    We substitute it into the EVaR definition \eqref{eq:evar-def} and obtain:
    \begin{equation}\label{eq:evar-norm1}
        \EVaR_\alpha(X) = \inf_{t > 0}\left(-\frac{1}{t}\log(1-\alpha) + \mu+\frac{1}{2}t\sigma^2\right).
    \end{equation}
    It is not hard to show by differentiation that the infimum is attained at the point
    \[
    t^* := \frac{1}{\sigma}\sqrt{-2\log(1-\alpha)}.
    \]
    After substituting $t^*$ into \eqref{eq:evar-norm1}, we get \eqref{eq:evar-norm}.
\end{proof}

\section[\appendixname~\thesection]{List of related probability distributions with density functions}
\label{sec:pdfs}
\begin{enumerate}
    \item {Gamma distribution $G(k,\theta)$: $f(x) = \frac{x^{k-1}e^{-x/\theta}}{\Gamma(k)\theta^k}, x>0, k>0, \theta>0.$}
    \item {Exponential distribution $\Exp(\lambda)$: $f(x) = \lambda e^{-\lambda x}, x \geq 0, \lambda>0. $}
    \item {Chi-squared distribution $\chi^2(k)$: $f(x) = \frac{x^{k/2-1}e^{-x/2}}{\Gamma(k/2)2^{k/2}}, x>0, k\in \mathbb{N} .$}
    \item {Laplace distribution $L(\mu, b)$: $f(x) = \frac{1}{2b}\exp\left\{-\frac{|x-\mu|}{b}\right\}, x, \mu \in \mathbb{R}, b>0.$}
    \item{Inverse Gaussian distribution $\IG(\mu, \lambda)$: $f(x) = \sqrt{\frac{\lambda}{2\pi x^3}\exp \left\{ -\frac{\lambda(x-\mu)^2}{2\mu^2x}\right\}},\\ x>0,\mu>0,\lambda>0.$}
    \item{Normal inverse Gaussian distribution $\NIG(\alpha,\beta,\mu,\delta)$: 
    \[
    f(x) = \frac{\alpha\delta K_1(\alpha \sqrt{\delta^2+(x-\mu)^2})}{\pi \sqrt{\delta^2+(x-\mu)^2}} \exp\left\{ \delta \sqrt{\alpha^2-\beta^2 } + \beta(x-\mu)\right\},
    \]
    $K_1(x) = \frac{1}{2}\int_0^{+\infty}\exp \left\{ -\frac{1}{2}x(t+t^{-1})\right\}dt$  is a   modified  Bessel function  of the  third  kind with index 1, $x,\beta,\mu, \in \mathbb{R},\alpha \geq |\beta|,\delta \geq 0.$} 
\end{enumerate}

\bibliographystyle{abbrv}
\bibliography{biblio}

\end{document}